\documentclass[draftclsnofoot,onecolumn]{IEEEtran}\usepackage{epsfig,latexsym,amsmath,amssymb,setspace,cite,color,graphicx,algorithm,algorithmic,cases}
\usepackage[font=footnotesize]{subfig}
\usepackage{tikz}

\usepackage{amsmath}
\usetikzlibrary{positioning, arrows.meta, calc, decorations.pathreplacing}

\usepackage[percent]{overpic}
\usepackage{amsmath}
\usepackage{amsfonts}
\usepackage{amssymb}
\usepackage{dsfont}
\usepackage{bm}
\usepackage{amsthm}
\usepackage{newlfont}
\usepackage{float}
\usepackage{hyperref}
\usepackage{algorithm}
\usepackage{algorithmic}
\usepackage{enumerate}
\usepackage{chngcntr}
\usepackage{mathtools}
\usepackage{breqn}
\usepackage{stmaryrd}
\usepackage{cite}
\usepackage{amsmath}
\usepackage{amsfonts}
\usepackage{amssymb}
\usepackage{dsfont}
\usepackage{bbm}
\usepackage{amsthm}
\usepackage{newlfont}
\usepackage{float}
\usepackage{hyperref}
\usepackage{enumerate}
\usepackage{chngcntr}
\usepackage{mathtools}
\usepackage{breqn}
\usepackage{stmaryrd}
\usepackage{cite}
\usepackage{algorithm}
\usepackage{algorithmic}
\usepackage{caption}
\hypersetup{
    colorlinks=true, 
    linkcolor= blue,
    citecolor=blue 
}
\newtheorem{definition}{Definition}
\newtheorem{lem}{Lemma}
\newtheorem{thm}{Theorem}

\usepackage{multicol}
\usepackage{cite}
\usepackage{amsmath,amssymb,amsfonts}
\usepackage{algorithmic}
\usepackage{graphicx}
\usepackage{textcomp}
\usepackage{xcolor}
\def\BibTeX{{\rm B\kern-.05em{\sc i\kern-.025em b}\kern-.08em
    T\kern-.1667em\lower.7ex\hbox{E}\kern-.125emX}}
    \IEEEoverridecommandlockouts
\allowdisplaybreaks
\title{Communication and Privacy Trade-Offs in Distributed Storage  With Shared Secret Keys}

\author{
 Maryam Keshvari, R\'emi A. Chou \thanks{M. Keshvari is with the School of Computing at Wichita State University. R. Chou is with the CSE Department at The University of Texas at Arlington.  This work was supported in part by NSF grant CCF-2425371. Part of the results has been presented in~\cite{keshvari2023distributed}.
 E-mail: remi.chou@uta.edu. }
}
\begin{document}
\maketitle

\begin{abstract}
We study distributed storage over public communication channels where privacy is realized using shared secret keys. Specifically, a user wishes to store a file across $L$ servers such that any subset of at least $t$ servers can reconstruct the file, while any subset of at most $z$ colluding servers learns no more than a fraction $\alpha$ of the file. Unlike classical secret-sharing models that assume secure communication channels between the user and each server, we consider a setting where all transmissions occur over a public channel and the user shares an individual secret key of length $n$ bits with each server. For a fixed key length $n$, we characterize the trade-off between the privacy leakage parameter $\alpha$ and the maximum file size that can be securely stored. Additionally, we determine the minimum local randomness, public communication, and storage required to achieve this optimal trade-off.
\end{abstract}
\begin{IEEEkeywords}
Distributed storage, colluding servers, secret-key–assisted communication, privacy--communication trade-off.
\end{IEEEkeywords}

\section{Introduction}
 Centralizing sensitive information storage puts the entire database at risk of being compromised, in the case of a data breach. Adopting a decentralized storage strategy can provide resilience against data breaches and avoid having a single point of entry for hackers.  For instance, secret sharing \cite{b2,b3} offers an effective solution to securely store data among $L$ servers, such that any coalition of $t\leq L$ servers can reconstruct the file by pooling their information but  any coalition of at most $ t-1 $ compromised servers cannot learn any information, in an information-theoretic sense, about the file. Applications of secret sharing to secure distributed storage have been extensively studied in a variety of contexts \cite{b4,b5,b6,b7,b8,b9,b10,b11,b12}. As motivated in\cite{b14,b15,b16}, the servers could be distinct cloud storage service providers, since companies may outsource data storage to reduce operating costs.

In this paper, we study secure distributed storage where communication between the user and the servers takes place over public channels, and privacy is enforced using shared secret keys. Specifically, we consider a system where a user wishes to store a file across $L$  servers such that $(i)$ any coalition of $t \leq L $ servers who combine their information can reconstruct the file, and $(ii)$ any $z$ colluding servers cannot learn more than a fraction $\alpha$ of the file.
 In our model, unlike traditional secret sharing models \cite{b2,b3},  no
information-theoretically secure channels are available cost-free. Instead, the user communicates with each server through a one-way public channel, and shares with each server  a secret key, comprising a uniformly distributed sequence of $n$ bits. This formulation introduces a new communication–privacy trade-off that captures how communication resources, secret keys, and storage resources jointly determine the achievable storage rate. Our main contribution is the characterization of the optimal trade-off among the privacy leakage parameter $\alpha$, the secret-key length, and the file length that a user can store. Furthermore, for this optimal trade-off, we determine
the minimum amount of local randomness needed by the user, the minimum
amount of public communication, and the minimum amount
of storage needed at the servers.  In the special case $\alpha=0$, corresponding to perfect privacy (i.e., no privacy leakage), our results recover those of the prior work in \cite{b1}, which considered only the perfect-privacy setting. We also investigate an extension of our setting to a multilevel privacy constraint, where for each level $d \in \{1, \ldots, D\}$, any set of $z_d$ colluding servers cannot learn more than a fraction $\alpha_d$ of the~file, offering a finer privacy leakage control than the case $D=1$. 

\subsection{Comparison With Previous Work \cite{b1}}

Although the perfect-privacy case of our model reduces to the result of~\cite{b1} when $\alpha=0$, the general leakage setting considered here introduces new technical challenges. In~\cite{b1}, the converse arguments exploit the fact that any set of up to $z$ colluding servers learns nothing about the file. Here, however, such a coalition may already learn as much as $\alpha H(F)$, so the analysis must explicitly account for this nonzero leakage term and track its effect throughout the converse. Specifically, one must quantify how the allowed leakage combines with the public communication and the shared secret keys, and how it modifies the optimal tradeoff between file size, public communication, and local randomness. In addition, the achievability proof must show that leakage-allowing secret sharing can still be combined optimally with key-assisted delivery over public channels, beyond the perfect-privacy regime.

These challenges become even more pronounced in the multilevel extension, where for each level $d\in\{1,\dots,D\}$, every coalition of at most $z_d$ servers must satisfy a distinct leakage constraint $\alpha_d H(F)$. In particular, the converse must remain valid for all pairs $(z_d,\alpha_d)$ at once, while the achievability must construct a single strategy that meets all these constraints simultaneously. These considerations are absent from~\cite{b1}, which considered a single level, i.e., $D=1$, with perfect privacy, i.e., $\alpha =0$. 

Note that the multilevel privacy extension considered here is also conceptually different from the multiple-file setting  in~\cite{b1}. The multiple-file model increases the number of stored files, whereas our multilevel formulation still considers a single file $F$ but imposes several simultaneous privacy constraints indexed by the collusion level. Thus, the challenge is not to protect several files separately, but to design one scheme that guarantees for a single file multiple nested leakage constraints simultaneously. 

\subsection{Comparison With Other Related Work}

The most challenging aspect of our study is to establish converse results for the maximum file size that the user can store, the optimal amount of local randomness needed by the user, and the optimal amount of public communication between the user and servers.  Unlike conventional secret sharing models, including those that consider privacy leakage \cite{Morteza2,b23,yoshida2018optimal,yamamoto1986secret,blakley1984security}, our model requires  considering in our analysis shared secret keys, open communication between all parties, and a joint design of the share creation phase and the share distribution phase. Indeed, conventional secret sharing models treat these two phases separately, focusing only on the creation phase of the shares and assuming that the secure communication phase occurs through information-theoretically secure channels without any cost. Then, we establish an achievability result that matches the converse bound and show that separating share generation from share distribution is optimal. Specifically, shares are first generated using a secret-sharing scheme, and are then securely delivered to the servers using a one-time pad together with the shared secret keys. 

Note that a related line of work has investigated  secret sharing via public communication in settings where, instead of shared secret keys, the dealer and participants have access to noisy resources. These noisy resources are typically obtained either from a communication channel \cite{zou2015information,rana2023short,sultana2023secret,chou2022explicit,welling} or a source of correlated randomness   \cite{csiszar2010capacity,chou2018secret,chou2019biometric,rana2020secret,chou2021distributed,sultana2025secret,Miller}. Due to the presence of noise, these models require coding schemes and converse techniques that differ fundamentally from those employed in this paper.  Moreover, these two settings are themselves closely related to the problems of compound wiretap channels   \cite{liang2009compound} and compound secret-key generation  \cite{tavangaran2016secret,bloch2010channel, chou2013secret,yachongka2024secret,yachongka2025secret}, respectively.

\subsection{Paper Organization} Section \ref{sec2} provides a formal statement of the problem. Section \ref{sec3} presents our main results. Section \ref{sec2b} presents a multilevel privacy extension of the problem described in Section \ref{sec2}.  Finally,  Section \ref{sec4} provides concluding remarks.

\section{Problem Statement} \label{sec2}
Notation: For  $a \in \mathbb{N}^{*}$, define $\llbracket 1,a \rrbracket \triangleq [ 1,a] \cap \mathbb{N}$. For   $x \in \mathbb{R}$, define $[x]^+ \triangleq \max (0,x)$. For a set $\mathcal{S}$, let $2^{S}$ denote the power set of $\mathcal{S}$.  Let also $\bigtimes$ denote the Cartesian product.

Consider $L$ servers indexed by $\mathcal{L} \triangleq \llbracket 1,L \rrbracket$ and one user. Assume that Server $l \in \mathcal{L}$ and the  user share a secret key $K_{l} \in \mathcal{K} \triangleq \{0,1\}^n$, which is a sequence of $n$ bits uniformly distributed over $\{0,1\}^n$. The $L$ keys are assumed to be jointly independent. For any $\mathcal{Y}\subseteq \mathcal{L}$, we use the notation $K_{\mathcal{Y}} \triangleq(K_{y})_{y \in \mathcal{Y}}.\\$ 
\begin{definition}
A $(2^{r^{(F)}},2^{r^{(R)}},(2^{r^{(M)}_{l}})_{l \in \mathcal{L}}, (2^{r^{(S)}_{l}})_{l \in \mathcal{L}}) $ private storage strategy consists of  \label{def1}
\end{definition}
\begin{itemize}
    \item A file $F$ owned by the  user, which is uniformly distributed over $\mathcal{F}\triangleq \{0,1\}^{r^{(F)}}$ and independent from the keys $K_{\mathcal{L}}$;
    \item A sequence of local randomness $R$ owned by the  user, which is uniformly distributed over $\mathcal{R}\triangleq \{0,1\}^{r^{(R)}}$ and independent from all other random variables;
    \item $L$ encoding functions $h_l: \mathcal{R} \times \mathcal{K} \times \mathcal{F} \xrightarrow{} \mathcal{M}_{l}$, where $l \in \mathcal{L}$ and $\mathcal{M}_{l}\triangleq \{0,1\}^{r^{(M)}_{l}}$;
    \item $L$ servers with storage space $r^{(S)}_{l}$ bits for Server $l \in \mathcal{L}$;
    \item $L$ encoding functions $g_l: \mathcal{M}_{l} \times \mathcal{K} \xrightarrow{} \mathcal{S}_{l}$, where $l \in \mathcal{L}$ and $\mathcal{S}_{l}\triangleq \{0,1\}^{r^{(S)}_{l}}$;
    \item $2^L$ decoding function $f_{\mathcal{A}}: \bigtimes_{l \in \mathcal{A}} \mathcal{S}_{l} \xrightarrow{} \mathcal{F}$, where $\mathcal{A}\subseteq \mathcal{L}$;
    $\\$and operates as follows:
\end{itemize}
\begin{enumerate}
\item The  user publicly sends to Server $l \in \mathcal{L}$ the message $M_{l} \triangleq h_{l}(R, K_{l} , F)$. We use the notation $M \triangleq(M_{l})_{l \in \mathcal{L}}$.
\item Server $l \in \mathcal{L}$  stores $S_{l} \triangleq g_{l}(M_{l}, K_{l})$.
\item Any subset of servers $\mathcal{A} \subseteq \mathcal{L}$ can compute $\hat F(\mathcal{A}) \triangleq f_{\mathcal{A}}(S_{\mathcal{A}})$ an estimate of  $F$, where $S_{\mathcal{A}}\triangleq (S_{l})_{l \in \mathcal{A}}.\\$
\end{enumerate}
\begin{definition}\label{def2a} Fix $  t \in \llbracket 1,L \rrbracket, \phantom{-}  z \in  \llbracket 1, t-1 \rrbracket  \phantom{-} and \phantom{-}  \alpha \in [0,1]\cap \mathbb{Q}$. Then, $r^{(F)}$ is $(\alpha,t,z)$-achievable if there exists a
$(2^{r^{(F)}},2^{r^{(R)}},(2^{r^{(M)}_{l}})_{l \in \mathcal{L}}, (2^{r^{(S)}_{l}})_{l \in \mathcal{L}}) $
private file storage strategy such that
\begin{align}
  \max_{\forall \mathcal A \subseteq \mathcal{L}:|\mathcal{A}| \geq t } H(F|\hat F(\mathcal{A}))&=0,\phantom{-} \text{(Reliability)} \label{marker1}\\
\max_{\forall \mathcal {U}  \subseteq \mathcal{L}:|\mathcal{U}| \leq z }  \frac{I(F;M,K_{\mathcal{U}})}{H(F)} &\leq \alpha,\phantom{-} \text{(Privacy leakage)}  \label{marker2}\\
  I(F;M)&=0.\phantom{-} \text{(Security)} \label{marker3} 
\end{align}
The set of all $(\alpha,t,z)$-achievable lengths $r^{(F)}$ is denoted by $\mathcal{C}_{F}(\alpha,t,z)$. 
\end{definition}
\eqref{marker1} means that any subset of servers with size greater than or equal to $t$ is able to perfectly recover the file $F$,  \eqref{marker2} means that any subset of servers with size smaller than or equal to $z$ can learn at most a fraction $\alpha$ of the file, and \eqref{marker3} means that the public communication does not leak any information about the file. For a fixed $\alpha$, our main objective is to determine, under the constraints \eqref{marker1}-\eqref{marker3}, the maximal file length that the user can store in the servers given that the secret keys shared with the servers have length $n$. Next, another of our objectives is to determine (i) the minimum amount of local randomness at the user, (ii) the minimum storage requirement at the servers, and (iii) the minimum amount of public communication from the user to the servers that are needed to achieve the largest file rate in $\mathcal{C}_{F}(\alpha,t,z)$. To this end, we introduce the following definition. 
\begin{definition}\label{def2} Fix $\alpha \in [0,1]\cap \mathbb{Q}$, $t \in \llbracket 1,L\rrbracket $ and $ z \in \llbracket1, t-1\rrbracket$. For $r^{(F)}$ in $\mathcal{C}_{F}(\alpha,t,z)$, let $\mathcal{Q}(r^{(F)})$ be the set of tuples $T\triangleq({r^{(R)}},(r^{(M)}_{l})_{l \in \mathcal{L}}, (r^{(S)}_{l})_{l \in \mathcal{L}})$ such that there exists a $(2^{r^{(F)}},2^{r^{(R)}},(2^{r^{(M)}_{l}})_{l \in \mathcal{L}}, (2^{r^{(S)}_{l}})_{l \in \mathcal{L}}) $ private file storage
strategy that $(\alpha,t,z)$-achieves $r^{(F)}$. Then, define
\begin{align}
    &r^{(F)}_{*} (\alpha,t,z) \triangleq \sup_{r^{(F)} \in \mathcal{C}_{F} (\alpha,t,z)} r^{(F)},  \label{marker4}\\
    &r^{(M)}_{l,*} (\alpha,t,z) \triangleq \inf_{T \in \mathcal{Q} (r^{(F)}_{*}(\alpha,t,z))} r^{(M)}_{l},l \in \mathcal{L},\label{marker5}\\
    &r^{(M)}_{\sum,*} (\alpha,t,z) \triangleq \inf_{T \in \mathcal{Q} (r^{(F)}_{*}(\alpha,t,z))} \sum_{l \in \mathcal{L}}r^{(M)}_{l},\label{marker6}\\
    &r^{(R)}_{*} (\alpha,t,z) \triangleq \inf_{T \in \mathcal{Q} (r^{(F)}_{*}(\alpha,t,z))} r^{(R)},\label{marker7}\\
    &r^{(S)}_{l,*} (\alpha,t,z) \triangleq \inf_{T \in \mathcal{Q} (r^{(F)}_{*}(\alpha,t,z))} r^{(S)}_{l},l \in \mathcal{L} \label{marker8}.
\end{align}
\end{definition}
$r^{(F)}_{*}(\alpha,t,z)$ is the largest file size that the user can privately store under the constraints \eqref{marker1}-\eqref{marker3}. Then,  $r^{(R)}_{*} (\alpha,t,z)$,  $r^{(M)}_{l,*} (\alpha,t,z)$,  $r^{(M)}_{\sum,*} (\alpha,t,z) $, and $r^{(S)}_{*}(\alpha,t,z)$, $l \in \mathcal{L}$, correspond to the least amount of local randomness, the minimum amount of public communication to Server~$l$, the minimum amount of public communication to all the servers, and the minimum storage size required at Server~$l$, respectively, needed for the user to achieve $r^{(F)}_{*} (\alpha,t,z)$. 

Although Definition \ref{def2} may  appear stringent, given that each optimal quantity is defined independently of the others, our results demonstrate the existence of a storage strategy that simultaneously achieve all these optimal quantities.

\section{Main Results} \label{sec3}

In Section~\ref{secIIIA}, we first present our converse results. In Section~\ref{sec:res}, we then introduce an achievability scheme that matches the converse bounds. The achievability scheme and its analysis are presented in Section~\ref{secachiev6}. All achievability results assume that the key and secret lengths are chosen so that all displayed lengths are integers and the underlying secret-sharing constructions exist over a finite field of characteristic two. To streamline the presentation, the proofs of the converse results are deferred to the appendix.

\subsection{Impossibility Results} \label{secIIIA}
Let $\alpha \in [0,1]\cap \mathbb{Q}$, $t \in \llbracket1,L\rrbracket$ and $z \in \llbracket1, t-1\rrbracket$.
\begin{thm}
(Converse on the file length).  We have \label{thm1}
\begin{align*}
    r^{(F)}_{*}(\alpha,t,z) \leq \min\left(\frac{t-z}{1-\alpha}, t\right)n.
\end{align*}\end{thm}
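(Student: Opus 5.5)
We prove the two bounds $r^{(F)} \leq tn$ and $(1-\alpha) r^{(F)} \leq (t-z)n$ separately for any achievable $r^{(F)}$. Since $F$ is uniform, $H(F) = r^{(F)}$, so it suffices to bound $H(F)$.

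Fix a set $\mathcal{A}$ with $|\mathcal{A}| = t$ and a subset $\mathcal{U} \subseteq \mathcal{A}$ with $|\mathcal{U}| = z$; this is possible because $z \leq t-1$. The key structural observation is the following. Since $S_l = g_l(M_l, K_l)$, the stored shares $S_{\mathcal{A}}$ are a function of $(M, K_{\mathcal{A}})$. By reliability~\eqref{marker1}, $H(F \mid M, K_{\mathcal{A}}) \leq H(F \mid S_{\mathcal{A}}) = 0$.

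\textbf{First bound.} Using security~\eqref{marker3}, we write
\begin{align*}
H(F) = H(F \mid M) = I(F; K_{\mathcal{A}} \mid M) \leq H(K_{\mathcal{A}}) \leq tn .
\end{align*}
This gives $r^{(F)} \leq tn$.

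\textbf{Second bound.} Starting from $H(F) = I(F; M, K_{\mathcal{A}})$, we apply the chain rule:
\begin{align*}
H(F) = I(F; M, K_{\mathcal{U}}) + I(F; K_{\mathcal{A}\setminus\mathcal{U}} \mid M, K_{\mathcal{U}}).
\end{align*}
The first term is at most $\alpha H(F)$ by the leakage constraint~\eqref{marker2}. The second term is at most $H(K_{\mathcal{A}\setminus\mathcal{U}}) \leq (t-z)n$. Hence
\begin{align*}
(1-\alpha) H(F) \leq (t-z)n .
\end{align*}
For $\alpha < 1$ this gives $r^{(F)} \leq \frac{t-z}{1-\alpha}\, n$. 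For $\alpha = 1$ this bound is vacuous, and we interpret $\frac{t-z}{1-\alpha}$ as $+\infty$. In that case the minimum reduces to the first bound $tn$.

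\textbf{Conclusion and main obstacle.} Combining the two bounds and taking the supremum over achievable $r^{(F)}$ yields the claim. The only delicate point is justifying $H(F \mid M, K_{\mathcal{A}}) = 0$, which relies on the deterministic structure of the maps $g_l$. The rest is chain-rule bookkeeping with the leakage term $\alpha H(F)$ moved to the left-hand side.
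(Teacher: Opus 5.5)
Your proof is correct and follows essentially the same route as the paper (Appendix~A with $D=1$). In both, reliability gives $H(F\mid M,K_{\mathcal{A}})=0$, and the term $I(F;M,K_{\mathcal{U}})$ is controlled by the leakage constraint, or by security for the $tn$ bound, which the paper encodes as the case $\alpha_0=z_0=0$ with $\mathcal{U}=\emptyset$; what remains is at most $H(K_{\mathcal{A}\setminus\mathcal{U}})=(t-z)n$, and your direct bound on it is slightly cleaner than the paper's detour through $I(K_{\mathcal{A}};K_{\mathcal{L}},F,R\mid K_{\mathcal{U}})$.
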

\begin{proof} Set $D=1$ in Appendix \ref{appendix:a}.\end{proof}
Theorem~\ref{thm1} means that it is impossible for the user to store a file of length larger than $\min\left(\frac{t-z}{1-\alpha}, t\right)n$ bits.
\begin{thm} (Converse on storage size requirement at the server). We have \label{thm2}
\begin{align*}
   & r^{(S)}_{l,*}(\alpha,t,z)\geq n, \forall  l \in \mathcal{L}.
\end{align*}\end{thm}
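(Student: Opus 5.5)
The plan is to lower-bound $H(S_l)$, which is at most $r^{(S)}_l$ since $S_l$ takes values in $\{0,1\}^{r^{(S)}_l}$, for any strategy that $(\alpha,t,z)$-achieves $r^{(F)}_*(\alpha,t,z)$. Fix such a strategy and a server $l\in\mathcal{L}$. Pick a set $\mathcal{A}\subseteq\mathcal{L}$ with $|\mathcal{A}|=t$ and $l\in\mathcal{A}$, and let $\mathcal{B}\triangleq\mathcal{A}\setminus\{l\}$, so $|\mathcal{B}|=t-1\geq z$.

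\textbf{Step 1 (recoverability).} By reliability \eqref{marker1}, $H(F)=I(F;S_{\mathcal{A}})$. The chain rule then gives
\begin{align*}
H(F)\leq I(F;S_{\mathcal{B}})+H(S_l).
\end{align*}

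\textbf{Step 2 (leakage of $t-1$ servers).} Each $S_j=g_j(M_j,K_j)$, so data processing gives $I(F;S_{\mathcal{B}})\leq I(F;M,K_{\mathcal{B}})$. Choose $\mathcal{U}\subseteq\mathcal{B}$ with $|\mathcal{U}|=z$. By the chain rule,
\begin{align*}
I(F;M,K_{\mathcal{B}})=I(F;M,K_{\mathcal{U}})+I(F;K_{\mathcal{B}\setminus\mathcal{U}}|M,K_{\mathcal{U}})\leq \alpha H(F)+(t-1-z)n,
\end{align*}
using \eqref{marker2} for the first term and $H(K_{\mathcal{B}\setminus\mathcal{U}})\leq(t-1-z)n$ for the second.

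Alternatively, by \eqref{marker3} and the independence of $F$ and $K_{\mathcal{L}}$,
\begin{align*}
I(F;M,K_{\mathcal{B}})=I(F;K_{\mathcal{B}}|M)\leq H(K_{\mathcal{B}})\leq(t-1)n.
\end{align*}
Hence $I(F;S_{\mathcal{B}})\leq\min\big(\alpha H(F)+(t-1-z)n,\,(t-1)n\big)$.

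\textbf{Step 3 (use optimality).} We have $H(F)=r^{(F)}_*=\min\left(\frac{t-z}{1-\alpha},t\right)n$. We may assume this value is achievable, which the matching achievability result establishes; otherwise one argues with a sequence of strategies approaching the sup.

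In the case $\frac{t-z}{1-\alpha}\leq t$, use the first bound. Step 1 gives $(1-\alpha)H(F)\leq(t-1-z)n+H(S_l)$, that is, $(t-z)n\leq(t-1-z)n+H(S_l)$. Hence $H(S_l)\geq n$.

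In the case $\frac{t-z}{1-\alpha}>t$, use the second bound. Step 1 gives $tn\leq(t-1)n+H(S_l)$, so again $H(S_l)\geq n$.

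Since $r^{(S)}_l\geq H(S_l)\geq n$ for every tuple in $\mathcal{Q}(r^{(F)}_*)$, taking the infimum gives the claim.

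\textbf{Main obstacle.} The delicate point is Step 2: obtaining a leakage bound for $t-1$ servers that is tight enough that the optimal file length forces a full $n$ bits at the remaining server. This requires choosing the right bound in each regime of the $\min$. A secondary technical issue is whether the supremum $r^{(F)}_*$ is attained. This is handled by the achievability theorem; if needed, one instead uses strategies with $H(F)$ arbitrarily close to $r^{(F)}_*$, noting that all lengths are integers.
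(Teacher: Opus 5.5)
Your argument is correct, but it takes a different route from the paper. The paper's proof (Appendix~\ref{appendix:b}) is one line: Server~$l$ must hold its $n$-bit key $K_l$ at the start of the protocol, so its storage is at least $|K_l|=n$. That argument uses none of the reliability, leakage or security constraints, so it holds for every strategy and every file length. It is, however, operational rather than derived from Definition~1, where $r^{(S)}_l$ is formally only the length of the final stored content $S_l=g_l(M_l,K_l)$.

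Your proof instead lower-bounds $H(S_l)$ from the constraints. It mirrors the chain of Appendix~\ref{appendix:a}, splitting $\mathcal{A}\setminus\mathcal{U}$ as $(\mathcal{B}\setminus\mathcal{U})\cup\{l\}$ and keeping $H(S_l)$ where the paper has a key term. This yields a stronger conclusion: at the optimal file length, each stored share must itself carry $n$ bits of entropy, even if the server discards its key after decoding. Your second bound in Step~2 is the first one with $(\alpha,z)$ replaced by $(0,0)$, which is how the paper handles the $tn$ regime via $d=0$. So your argument also carries over directly to Theorem~\ref{thm2b}.

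The price is that your converse is not standalone. It needs $r^{(F)}_*(\alpha,t,z)=\min\left(\frac{t-z}{1-\alpha},t\right)n$, so it depends on Theorem~\ref{thm6} and hence on the paper's integrality and secret-sharing-existence assumptions. Your fallback about \emph{a sequence of strategies approaching the sup} does not address this. $r^{(F)}$ is an integer bounded by Theorem~\ref{thm1}, so the supremum is always attained. The real issue is whether it equals the Theorem~\ref{thm1} bound. If $r^{(F)}_*$ were well below that bound, your chain would only give $H(S_l)\geq(1-\alpha)r^{(F)}_*-(t-1-z)n$, which can be smaller than $n$.
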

\begin{proof} Set $D=1$ in Appendix  \ref{appendix:b}.\end{proof}
Theorem~\ref{thm2} means that Server~$l \in \mathcal{L}$ needs a storage capacity of at least $n$ bits.
\begin{thm}(Converse on the total amount of public communication to the servers). We have \label{thm3}
\begin{align} 
    &\max \left( \frac{1-\alpha}{t-z},  \frac{1}{t}\right)L \times r_{*}^{(F)}(\alpha,t,z)\leq r_{\sum,*}^{(M)}(\alpha,t,z).\label{marker9}
\end{align}
\end{thm}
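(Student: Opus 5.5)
The plan is to prove, for any private storage strategy satisfying \eqref{marker1}--\eqref{marker3}, two lower bounds on subset sums of the message entropies $H(M_l)$. Averaging these over all subsets, using $H(M_l)\le r^{(M)}_l$ and $H(F)=r^{(F)}$, and then specializing to strategies achieving $r^{(F)}=r^{(F)}_*(\alpha,t,z)$ and taking the infimum in \eqref{marker6} will give \eqref{marker9}.

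\emph{Leakage-dependent bound.} Fix $\mathcal{A}\subseteq\mathcal{L}$ with $|\mathcal{A}|=t$ and $\mathcal{U}\subset\mathcal{A}$ with $|\mathcal{U}|=z$. Set $\mathcal{V}\triangleq\mathcal{A}\setminus\mathcal{U}$, so $|\mathcal{V}|=t-z$, and write $M_{\mathcal{V}^c}\triangleq(M_l)_{l\notin\mathcal{V}}$.
\begin{itemize}
\item Reliability \eqref{marker1} gives $H(F)=I(F;S_{\mathcal{A}})$. Since $S_{\mathcal{A}}$ is a function of $(M,K_{\mathcal{A}})$, this yields $H(F)\le I(F;M,K_{\mathcal{A}})$.
\item Subtracting the leakage bound \eqref{marker2}, $I(F;M,K_{\mathcal{U}})\le\alpha H(F)$, gives
$(1-\alpha)H(F)\le I(F;M_{\mathcal{V}},K_{\mathcal{V}}\mid M_{\mathcal{V}^c},K_{\mathcal{U}})$.
\item Expand by the chain rule as $I(F;K_{\mathcal{V}}\mid M_{\mathcal{V}^c},K_{\mathcal{U}})+I(F;M_{\mathcal{V}}\mid K_{\mathcal{V}},M_{\mathcal{V}^c},K_{\mathcal{U}})$.
\item The first term is zero. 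Indeed, $M_{\mathcal{V}^c}$ is a function of $(R,F,K_{\mathcal{V}^c})$, so $K_{\mathcal{V}}$ is independent of $(F,M_{\mathcal{V}^c},K_{\mathcal{U}})$ by the joint independence of keys, file and local randomness.
\item The second term is at most $H(M_{\mathcal{V}})\le\sum_{j\in\mathcal{V}}H(M_j)$.
\end{itemize}
Hence $(1-\alpha)H(F)\le\sum_{j\in\mathcal{V}}H(M_j)$ for every $\mathcal{V}$ with $|\mathcal{V}|=t-z$.

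\emph{Leakage-free bound.} Repeat the argument with $\mathcal{U}=\emptyset$. The subtracted term is now $I(F;M)$, which is zero by the security constraint \eqref{marker3}. The same independence step then gives $H(F)\le\sum_{j\in\mathcal{A}}H(M_j)$ for every $|\mathcal{A}|=t$.

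\emph{Averaging.} Sum the first inequality over all $\binom{L}{t-z}$ subsets $\mathcal{V}$. Each index $j$ appears in a fraction $(t-z)/L$ of them, so
$\sum_{l\in\mathcal{L}}H(M_l)\ge \frac{L(1-\alpha)}{t-z}H(F)$. Averaging the second inequality over $t$-subsets in the same way gives $\sum_l H(M_l)\ge\frac{L}{t}H(F)$. Taking the maximum of the two bounds, bounding $H(M_l)\le r^{(M)}_l$, and using $H(F)=r^{(F)}_*(\alpha,t,z)$ for every tuple in $\mathcal{Q}(r^{(F)}_*)$ gives \eqref{marker9}.

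The main obstacle is the independence step that kills $I(F;K_{\mathcal{V}}\mid M_{\mathcal{V}^c},K_{\mathcal{U}})$. It requires conditioning on all messages outside $\mathcal{V}$ rather than only those of $\mathcal{U}$, so that the leakage constraint can be applied to the full public vector $M$ while the keys in $\mathcal{V}$ stay independent of everything else in the conditioning. This ordering must be checked carefully against the model, in particular that $M_l$ depends only on $(R,K_l,F)$.
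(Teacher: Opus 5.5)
Your proof is correct and is essentially the paper's argument. The paper's Lemma~\ref{lem2} is exactly your per-subset bound $\sum_{l\in\mathcal{V}}H(M_l)\ge(1-\alpha)H(F)$ for $|\mathcal{V}|=t-z$. Its $\frac{1}{t}$ term comes from the same degenerate case $z_0=0$, $\alpha_0=0$ that you handle with $\mathcal{U}=\emptyset$ and the security constraint \eqref{marker3}. The bound then follows by averaging over subsets, as you do.

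The only difference is in how that lemma is derived. The paper conditions only on $(M_{\mathcal{U}},K_{\mathcal{U}})$, uses $I(F;M_{\mathcal{U}},K_{\mathcal{U}})\le I(F;M,K_{\mathcal{U}})\le\alpha H(F)$, and adds and then cancels $\sum_{l\in\mathcal{V}}H(K_l)$. This shows that the worry in your last paragraph is unfounded: conditioning on all of $M_{\mathcal{V}^c}$ is not needed, since $K_{\mathcal{V}}$ is equally independent of $(F,M_{\mathcal{U}},K_{\mathcal{U}})$.
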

\begin{proof} Set $D=1$ in Appendix \ref{appendix:c}.\end{proof}
Theorem~\ref{thm3} means that it is impossible for the user to store a file of length $r^{(F)}_{*}(\alpha,t,z)$ if the public communication sum length to the servers is smaller than $\max \left( \frac{1-\alpha}{t-z},  \frac{1}{t}\right)L \times r_{*}^{(F)}(\alpha,t,z)$ bits.
\begin{thm}(Converse on the amount of public communication
to an individual server).  
Consider the following condition
\begin{align}
   \forall \mathcal{U}, \mathcal{V} \subseteq \mathcal{L}, |\mathcal{U}| =|\mathcal{V}| \Rightarrow I(F;M_{\mathcal{U}},K_{\mathcal{U}})=I(F;M_{\mathcal{V}},K_{\mathcal{V}}). \label{marker10}
\end{align}
\eqref{marker10} indicates that any two sets of colluding servers that have the same size have the same amount of information about the file $F$. If \eqref{marker10} holds, then we have
\begin{align*}
   \max\left(\frac{1-\alpha}{t-z} , \frac{1}{t} \right) \times r^{(F)}_* (\alpha,t,z)\leq r_{l,*}^{(M)} (\alpha,t,z),\forall l \in \mathcal{L}. 
\end{align*} \label{thm4}
\end{thm}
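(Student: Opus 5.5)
The plan is to lower bound $H(M_l)$ for each fixed $l \in \mathcal{L}$ and then use $r^{(M)}_l \geq H(M_l)$. I will work with any strategy in $\mathcal{Q}(r^{(F)}_*(\alpha,t,z))$, for which $H(F)=r^{(F)}_*(\alpha,t,z)$. Taking the infimum over such strategies then gives the claim. The central object is the function $f(k) \triangleq I(F;M_{\mathcal{U}},K_{\mathcal{U}})$ for $|\mathcal{U}|=k$. By condition \eqref{marker10}, $f$ depends only on $k$ and not on the choice of $\mathcal{U}$. Its boundary values are as follows.
\begin{itemize}
\item $f(0)=0$.
\item $f(t)=H(F)$. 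Reliability \eqref{marker1} holds and $S_{\mathcal{A}}$ is a function of $(M_{\mathcal{A}},K_{\mathcal{A}})$, so $H(F) \geq f(t) \geq I(F;S_{\mathcal{A}}) = H(F)$.
\item $f(z)\le \alpha H(F)$. This follows from \eqref{marker2}, since $I(F;M_{\mathcal{U}},K_{\mathcal{U}}) \leq I(F;M,K_{\mathcal{U}})$.
\end{itemize}

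The key step is a per-increment bound. I want to show that for every $k \in \llbracket 0,t-1\rrbracket$ and every $l$,
\begin{align*}
f(k+1)-f(k) \leq H(M_l).
\end{align*}
Since $k<t\leq L$, I can pick $\mathcal{U}$ with $|\mathcal{U}|=k$ and $l\notin \mathcal{U}$. By \eqref{marker10}, the increment equals
\begin{align*}
f(k+1)-f(k) = I(F;M_l,K_l|M_{\mathcal{U}},K_{\mathcal{U}}).
\end{align*}
Applying the chain rule splits this into
\begin{align*}
I(F;M_l,K_l|M_{\mathcal{U}},K_{\mathcal{U}}) = I(F;K_l|M_{\mathcal{U}},K_{\mathcal{U}}) + I(F;M_l|M_{\mathcal{U}},K_{\mathcal{U}},K_l).
\end{align*}
The first term is zero. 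Indeed, $M_{\mathcal{U}}$ is a function of $(R,K_{\mathcal{U}},F)$, and $K_l$ is independent of $(R,F,K_{\mathcal{U}})$ by the model assumptions, so $K_l$ is independent of $(F,M_{\mathcal{U}},K_{\mathcal{U}})$. The second term is at most $H(M_l)$.

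I expect the main obstacle to be exactly this step. Without the symmetry \eqref{marker10}, the increments along a chain of nested sets cannot all be charged to the same server $l$. One would only get a bound on a sum of $H(M_{l'})$ over different servers, which is what Theorem~\ref{thm3} yields. The symmetry is what allows every increment to be realized by adding the specific server $l$.

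Finally, I telescope the increment bounds. Summing from $k=z$ to $t-1$ gives $(1-\alpha)H(F) \leq f(t)-f(z) \leq (t-z)H(M_l)$. Summing from $k=0$ to $t-1$ gives $H(F)=f(t)-f(0) \leq tH(M_l)$. Combining the two, and using $H(M_l)\le r^{(M)}_l$ and $H(F)=r^{(F)}_*(\alpha,t,z)$, I obtain $\max\left(\frac{1-\alpha}{t-z},\frac{1}{t}\right) r^{(F)}_*(\alpha,t,z)\leq r^{(M)}_l$. Taking the infimum over $\mathcal{Q}(r^{(F)}_*(\alpha,t,z))$ gives the bound on $r^{(M)}_{l,*}(\alpha,t,z)$.
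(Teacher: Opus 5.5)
Your proof is correct, and it takes a more direct route than the paper.

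\textbf{The paper's route.} The paper proves the bound in Appendix~\ref{appendix:d} with $D=1$. It uses the level $d=1$, plus the auxiliary level $d=0$ with $z_0=\alpha_0=0$ to get the $\frac{1}{t}$ term. It fixes a nested chain of sets $\mathcal{V}_{z_d}\subseteq\cdots\subseteq\mathcal{V}_{t-1}$ that avoid $l$, and expands $H(M_l)+H(K_l)\ge H(M_l,K_l\mid M_{\mathcal{V}_{z_d}},K_{\mathcal{V}_{z_d}})$ telescopically along this chain. Using \eqref{marker10}, this yields Lemma~\ref{lem3}: $H(M_l)$ is at least the sum of the positive parts of the second differences $2\beta_{i+1}-\beta_i-\beta_{i+2}$ for $i\in\llbracket z_d,t-1\rrbracket$, with $\beta_{t+1}=\beta_t$. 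In \eqref{marker21} it then shows this sum dominates the largest increment $\beta_{j+1}-\beta_j$, which in turn dominates the average increment $(\beta_t-\beta_{z_d})/(t-z_d)$.

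\textbf{Your route.} You bound every increment $f(k+1)-f(k)=I(F;M_l,K_l\mid M_{\mathcal U},K_{\mathcal U})$ by $H(M_l)$ in one step. This uses only the chain rule and the independence of $K_l$ from $(F,R,K_{\mathcal U})$. You then average the increments.

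\textbf{Comparison.} The paper's Lemma~\ref{lem3} is formally a stronger intermediate statement. It is a variation-type bound on the increments, which dominates your max-increment bound. That extra strength is thrown away in \eqref{marker21}, however, so for the theorem as stated your argument loses nothing. In exchange, it avoids the $H(K_l)$ bookkeeping, the case-defined sets $\mathcal{V}_i$, and the minimization over the auxiliary function class. It also makes clear that the $\frac{1}{t}$ term follows from reliability alone, via $f(0)=0$.
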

\begin{proof} Set $D=1$ in Appendix \ref{appendix:d}.\end{proof}
Under the condition \eqref{marker10}, Theorem~\ref{thm4} means that it is impossible for the user to store a file of length $ r_{*}^{(F)}$ if the public communication length to Server $l \in \mathcal{L}$ is smaller than $ \max\left(\frac{1-\alpha}{t-z}, \frac{1}{t} \right) \times r^{(F)}_* (\alpha,t,z)$~bits.
\begin{thm}\label{thm5}
 (Converse on the amount of required local randomness at the users). We have
\begin{align}
  r^{(R)}_{*}(\alpha,t,z) \geq \frac{[z-\alpha t]^+}{t-z}r^{(F)}_{*} (\alpha,t,z) \label{marker11}.
\end{align}
\end{thm}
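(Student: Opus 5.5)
We would bound $r^{(R)}$ from below by the entropy of public messages that remains after conditioning on the file and the keys, and then use the optimality of $r^{(F)}_*$ to make the converse chain of Theorem~\ref{thm1} tight. If $z\le \alpha t$ the right-hand side of \eqref{marker11} is $0$ and there is nothing to prove, so assume $z>\alpha t$. In that regime $\frac{t-z}{1-\alpha}<t$, so we work with a strategy achieving $r^{(F)}=r^{(F)}_*=\frac{t-z}{1-\alpha}n$, i.e. $(1-\alpha)H(F)=(t-z)n$. The identity $z(1-\alpha)-\alpha(t-z)=z-\alpha t$ rewrites the target as
\begin{align*}
H(R)\ \ge\ zn-\alpha H(F).
\end{align*}

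The steps are as follows. Fix $\mathcal A\subseteq\mathcal L$ with $|\mathcal A|=t$, and split it as $\mathcal A=\mathcal U\cup\mathcal V$ with $|\mathcal U|=z$ and $|\mathcal V|=t-z$. Since $M_{\mathcal A}$ is a function of $(R,K_{\mathcal A},F)$, we have $r^{(R)}\ge H(R)\ge H(M_{\mathcal A}|F,K_{\mathcal A})$.

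Next we extract structural equalities from tightness. Reliability (the $S_l$ are functions of $(M_l,K_l)$), the leakage constraint \eqref{marker2} and $H(K_{\mathcal V})=(t-z)n$ give
\begin{align*}
H(F)=I(F;M_{\mathcal A},K_{\mathcal A})\le \alpha H(F)+I(F;K_{\mathcal V}|M_{\mathcal A},K_{\mathcal U})\le \alpha H(F)+(t-z)n .
\end{align*}
At the optimum every inequality here is an equality. In particular $H(K_{\mathcal V}|M_{\mathcal A},K_{\mathcal U},F)=0$, and hence $I(M_{\mathcal A};K_{\mathcal V}|F,K_{\mathcal U})=(t-z)n$, using that $K_{\mathcal V}$ is independent of $(F,K_{\mathcal U})$. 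Substituting into the first bound gives
\begin{align*}
H(M_{\mathcal A}|F,K_{\mathcal A})=H(M_{\mathcal A}|F,K_{\mathcal U})-(t-z)n\ \ge\ H(M_{\mathcal A}|K_{\mathcal U})-\alpha H(F)-(t-z)n,
\end{align*}
where the last step uses $I(F;M_{\mathcal A}|K_{\mathcal U})\le I(F;M,K_{\mathcal U})\le\alpha H(F)$.

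It therefore remains to show $H(M_{\mathcal A}|K_{\mathcal U})\ge tn$, which is the main obstacle. For this we would reuse the ingredients behind Theorem~\ref{thm3}: by security \eqref{marker3} and the independence of the keys, one expects each $M_{\mathcal A}$ to carry about $n$ bits per server that are not explained by $K_{\mathcal U}$. Concretely, we would re-run the tightness argument with a different split $\mathcal A=\mathcal U'\cup\mathcal V'$, and our first attempt would choose $\mathcal V'\supseteq\mathcal U$ when $z\le t-z$. That split gives $I(M_{\mathcal A};K_{\mathcal V'}|F,K_{\mathcal U'})=(t-z)n$. Combined with the chain rule and \eqref{marker3}, it should lower bound the part of $H(M_{\mathcal A})$ that is correlated with $K_{\mathcal U}$, yielding $H(M_{\mathcal A}|K_{\mathcal U})\ge (t-z)n+zn$.

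If the direct split fails, for instance when $z>t-z$, we would fall back on averaging over all $z$-subsets $\mathcal U\subseteq\mathcal A$ with a Han-type inequality. The per-subset equalities $H(K_{\mathcal V}|M_{\mathcal A},K_{\mathcal U},F)=0$, which hold for every split, should then add up to the required $tn$. Combining this bound with the displayed chain gives $H(R)\ge zn-\alpha H(F)$, which is \eqref{marker11}.
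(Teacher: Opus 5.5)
\textbf{There is a genuine gap: the step you call the main obstacle is the whole theorem, and neither suggested fix proves it.}

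Your reduction steps are correct, but they make no progress. At a strategy with $(1-\alpha)H(F)=(t-z)n$, every inequality in your chain is tight. This includes $I(F;M_{\mathcal A}|K_{\mathcal U})=I(F;M_{\mathcal A},K_{\mathcal U})=\alpha H(F)$, so in fact $H(M_{\mathcal A}|F,K_{\mathcal A})=H(M_{\mathcal A}|K_{\mathcal U})-\alpha H(F)-(t-z)n$ holds with equality. Hence the claim $H(M_{\mathcal A}|K_{\mathcal U})\ge tn$ is equivalent to $H(M_{\mathcal A}|F,K_{\mathcal A})\ge zn-\alpha H(F)$. That is the theorem itself, with $H(R)$ replaced by the smaller quantity $H(M_{\mathcal A}|F,K_{\mathcal A})$.

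Your two suggestions for closing it only re-express it. The split with $\mathcal U\subseteq\mathcal V'$ gives $K_{\mathcal U}$ independent of $M_{\mathcal A}$, so the goal becomes $H(M_{\mathcal A})\ge tn$, which is again equivalent. The equalities $H(K_{\mathcal V}|M_{\mathcal A},K_{\mathcal U},F)=0$ over all splits only say $H(K_{\mathcal A}|M_{\mathcal A},F)=H(K_{\mathcal U}|M_{\mathcal A},F)$ for every $z$-subset $\mathcal U$. Summing them, or applying Han's inequality to the keys, yields nothing. The missing ingredient is a lower bound on the joint entropy of $z$ servers' message--key pairs beyond their keys. It does not come out of the tightness of the file-length bound.

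The paper supplies exactly this bound. Lemma~\ref{lem7} shows that for $|\mathcal V|<z$ and disjoint $\mathcal T,\mathcal S$ of sizes $z-|\mathcal V|$ and $t-z$, $\sum_{l\in\mathcal S}H(M_l,K_l|M_{\mathcal V},K_{\mathcal V})\ge(1-\alpha)H(F)+(t-z)n$. Averaging over $(\mathcal T,\mathcal S)$ gives a server $l^*$ with $H(M_{l^*},K_{l^*}|M_{\mathcal V},K_{\mathcal V})\ge n+\frac{1-\alpha}{t-z}H(F)$ (Lemma~\ref{lem8}). The paper then chains $z$ such greedy choices (Lemma~\ref{lem9}) and uses $H(M_{\mathcal L},K_{\mathcal L}|M_{\mathcal V_z},K_{\mathcal V_z})\ge(1-\alpha)H(F)+(L-z)n$ and $H(M_{\mathcal L},K_{\mathcal L})\le H(R)+H(F)+nL$. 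This gives $H(R)\ge\frac{z-\alpha t}{t-z}H(F)$ for every strategy.

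Your Han idea does work if you apply it to $X_l\triangleq(M_l,K_l)$, $l\in\mathcal A$, rather than to the key equalities. For each $z$-subset $\mathcal W\subseteq\mathcal A$, $H(X_{\mathcal A})=H(X_{\mathcal W})+H(X_{\mathcal A\setminus\mathcal W}|X_{\mathcal W})$. By the same argument as in Lemma~\ref{lem7}, $H(X_{\mathcal A\setminus\mathcal W}|X_{\mathcal W})\ge(1-\alpha)H(F)+(t-z)n$. Han's inequality gives $\binom{t}{z}^{-1}\sum_{\mathcal W}H(X_{\mathcal W})\ge\frac{z}{t}H(X_{\mathcal A})$, so averaging yields $H(X_{\mathcal A})\ge\frac{t(1-\alpha)}{t-z}H(F)+tn$. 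Combined with $H(X_{\mathcal A})\le H(R)+H(F)+tn$, this is the theorem.

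Neither argument needs tightness. Your route, by contrast, assumes $r^{(F)}_*=\frac{t-z}{1-\alpha}n$ is attained. That imports the achievability result, with its integrality assumptions, into a converse.
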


\begin{proof}
See Appendix~\ref{appendix:e}.
\end{proof}Theorem~\ref{thm5} means that it is impossible for the  user to store a file of length $r^{(F)}_{*}(\alpha,t,z)$ if the amount of its local randomness is smaller than $\frac{[z-\alpha t]^+}{t-z}r^{(F)}_{*} (\alpha,t,z)$ bits.

\subsection{Capacity Result} \label{sec:res}
 Let $\alpha \in [0,1]\cap \mathbb{Q}$, $ t \in \llbracket 1, L\rrbracket$ and $ z\in \llbracket 1, t-1\rrbracket$. In the following, we propose an achievability scheme that simultaneously matches all the converse results of Section~\ref{secIIIA}.
\begin{thm}
    There exists a $(2^{r^{(F)}},2^{r^{(R)}},(2^{r^{(M)}_{l}})_{l \in \mathcal{L}}, (2^{r^{(S)}_{l}})_{l \in \mathcal{L}}) $ private file storage strategy that $(\alpha, t, z)$-achieves $r^{(F)}$such that, 
    \begin{align*}
        r^{(F)}&= \min \left(\frac{t-z}{1-\alpha},t \right)n,\\
r^{(R)}&=\frac{[z-\alpha t]^+}{1-\alpha} n,\\
r^{(S)}_{l}&= n, \forall l \in  \mathcal{L},\\
r^{(M)}_{l}&= n, \forall l \in  \mathcal{L}.
    \end{align*}
    \label{thm6}
\end{thm}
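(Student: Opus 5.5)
The plan is to separate share creation from share delivery, which the introduction already indicates is optimal. First we build a ramp (leakage-allowing) secret-sharing scheme whose $L$ shares each have exactly $n$ bits. Then each share is delivered by a one-time pad with the corresponding key: $M_l = G_l \oplus K_l$ and $S_l = M_l \oplus K_l = G_l$. With this layer, $r^{(M)}_l = r^{(S)}_l = n$ is automatic. Security \eqref{marker3} holds because $M$ is uniform and independent of $(F,R)$, given that $K_{\mathcal L}$ is uniform and independent of everything else. Reliability \eqref{marker1} reduces to reliability of the sharing scheme from any $t$ shares. Privacy \eqref{marker2} reduces to the sharing-scheme leakage. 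Indeed, $(M,K_{\mathcal U})$ is in bijection with $(G_{\mathcal U},\, G_{\mathcal U^c}\oplus K_{\mathcal U^c})$, and the second component is uniform and independent of $(F,G_{\mathcal U})$. Hence $I(F;M,K_{\mathcal U}) = I(F;G_{\mathcal U})$.

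For the sharing scheme, work over $\mathbb{F}_{q}$ with $q=2^{m}$ chosen so that $n$ is a multiple of $m$ and the displayed lengths are integers; we may also work in blocks. Let the file consist of $k$ symbols and the randomness of $\rho$ symbols, and take a linear ramp/MDS (Reed--Solomon-type) code. Evaluate the polynomial whose first $\rho$ coefficients are random symbols and whose next $k$ coefficients are file symbols at $L$ distinct points, packing each share into $n$ bits; this requires $k+\rho \le t\cdot n/m$. Any $t$ shares then determine all $k+\rho$ coefficients, giving reliability.

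For privacy, any $z$ shares form $zn/m$ linear combinations of the coefficients. By the MDS property, these shares have full rank on the randomness part when $\rho \ge zn/m$, and otherwise reveal $zn/m-\rho$ symbols' worth of the file. Concretely, $I(F;G_{\mathcal U}) = [\,zn - r^{(R)}\,]^+$ in bits, taking $r^{(F)}+r^{(R)} = tn$ in the case $z>\alpha t$.

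Now split into cases. If $z \le \alpha t$, take $\rho=0$ and $r^{(F)}=tn$. The leakage is $zn \le \alpha tn = \alpha r^{(F)}$, and this matches $\min(\frac{t-z}{1-\alpha},t)=t$ and $[z-\alpha t]^+=0$. If $z>\alpha t$, take $r^{(F)}=\frac{t-z}{1-\alpha}n$ and $r^{(R)}=\frac{z-\alpha t}{1-\alpha}n$. Their sum is $tn$, and the leakage is $zn-r^{(R)} = \frac{(1-\alpha)z - z+\alpha t}{1-\alpha}n = \alpha\frac{t-z}{1-\alpha}n = \alpha r^{(F)}$, as required. The case $\alpha=1$ falls under $z\le \alpha t$.

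The main obstacle is to verify rigorously the leakage formula $I(F;G_{\mathcal U})=[zn-r^{(R)}]^+$ for every $\mathcal U$ with $|\mathcal U|\le z$. This needs the right code structure: a generator matrix in which every set of $z$ share-blocks restricted to the randomness coordinates has rank $\min(zn/m,\rho)$, and every $t$ blocks have full rank. I would handle this by viewing each share as a single symbol over the extension field $\mathbb{F}_{2^{n}}$, or over a field of size $2^{n/c}$ with a suitable common denominator $c$, and using a Reed--Solomon code with $L\le 2^{n/c}$. Then $I(F;G_{\mathcal U}) = H(G_{\mathcal U}) - H(G_{\mathcal U}|F)$, where $H(G_{\mathcal U})=|\mathcal U|n$ and $H(G_{\mathcal U}|F)=\min(|\mathcal U|n, r^{(R)})$ by Vandermonde invertibility. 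This gives the leakage $[|\mathcal U|n-r^{(R)}]^+\le \alpha r^{(F)}$. The case of fractional symbol counts is handled by the integrality assumption stated at the start of Section~\ref{sec3}.
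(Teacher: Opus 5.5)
Your proof has the same structure as the paper's. You generate $(\alpha,t,z)$ ramp shares of $n$ bits each, deliver them by one-time pad $M_l=G_l\oplus K_l$ with $S_l=G_l$, and reduce reliability, security and privacy to the sharing scheme, using the same parameters. The difference is the key ingredient. The paper does not build the sharing scheme: it invokes Theorem~\ref{thm7} from \cite{b23} with $n_s=r^{(F)}$ to get $n_{sh}=n$ and $n_r=\frac{[z-\alpha t]^+}{1-\alpha}n$ directly, and then bounds the leakage by an inequality chain. You instead construct the scheme from a Vandermonde/Reed--Solomon code, compute the leakage exactly as $[|\mathcal U|n-r^{(R)}]^+$, and obtain privacy as the identity $I(F;M,K_{\mathcal U})=I(F;G_{\mathcal U})$. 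Your route is self-contained and makes explicit the divisibility and field-size conditions hidden in the paper's standing assumption that the secret-sharing construction exists over a field of characteristic two. The paper's route is shorter and inherits whatever generality the cited construction has.

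Two details need repair.

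\emph{The bijection.} $(M,K_{\mathcal U})$ is in bijection with $(G_{\mathcal U},K_{\mathcal U},G_{\mathcal U^c}\oplus K_{\mathcal U^c})$, not with $(G_{\mathcal U},G_{\mathcal U^c}\oplus K_{\mathcal U^c})$. Dropping $K_{\mathcal U}$ is harmless because it is independent of $(F,R,K_{\mathcal U^c})$, but you should say so.

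\emph{The code description.} As written it is inconsistent: ``evaluate at $L$ distinct points'' conflicts with $k+\rho\le tn/m$. There are two ways to fix it.
\begin{itemize}
\item One polynomial. Each $n$-bit share consists of $c$ symbols of $\mathbb{F}_{2^{n/c}}$ taken from a single polynomial with $tc$ coefficients; this is what $k+\rho\le tn/m$ and your Vandermonde rank claim actually require. You then need $Lc$ distinct evaluation points, i.e., $Lc\le 2^{n/c}$, not $L\le 2^{n/c}$.
\item $c$ interleaved polynomials, each evaluated at $L$ points. Then $H(G_{\mathcal U}|F)=r^{(R)}$ for $|\mathcal U|=z$ holds only if each polynomial carries at most $z$ random coefficients. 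So the randomness must be spread across the instances, which is possible since $r^{(R)}\le zn$.
\end{itemize}
With either fix your argument goes through.
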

Theorem~\ref{thm6} establishes the optimality of separating
share generation from share distribution for the maximum
file length and, at this maximum, for the minimum local
randomness and total public communication. Indeed, the
converse bounds in Section~\ref{secIIIA}
are attained simultaneously by first generating
$(\alpha,t,z)$-secret shares $(H_l)_{l\in\mathcal{L}}$
from $(F,R)$, independently of the keys, and then
publicly sending $M_l=H_l\oplus K_l$ to Server~$l$.
Thus, within this model, jointly designing share
generation and share distribution procures no advantage. 
Theorem~\ref{thm6} also recovers the perfect-privacy
result in~\cite{b1} when $\alpha=0$.

From Theorem~\ref{thm6}, we observe two distinct regimes for $\alpha$. When $\alpha \geq \frac{z}{t}$ (respectively $\alpha < \frac{z}{t}$), the  user can store a file of size
at most $ nt$  bits (respectively $ \frac{t-z}{1-\alpha}n$ bits) such that any set of servers larger than
or equal to $t$ can reconstruct the file, and any set of servers
smaller than or equal to $z$ can learn at most a fraction
$\alpha$ of the file. Moreover, if the  user stores a file of length $ \frac{t-z}{1-\alpha} n$ bits (respectively $ nt$ bits), then the optimal storage capacity at each server is $ n$ bits,
the minimum amount of local randomness needed by the  user is
$\frac{z-\alpha t}{1-\alpha} n$ bits (respectively $0$ bits), and the optimal amount of public communication
from the user to all the servers is $L \times n $~bits.

\subsection{Proof of Theorem~\ref{thm6}} \label{secachiev6}

We first review the notion of secret sharing with privacy leakage before we present our coding scheme and its analysis.

\subsubsection{Review of secret sharing with privacy leakage} \label{secprelim1}
Let $\alpha \in [0,1]\cap \mathbb{Q}$, $t \in \llbracket1,L\rrbracket$ and $z\in \llbracket 1 , t-1 \rrbracket$.
\begin{definition}  \label{defssh}
An $(\alpha, t, z)$- secret sharing scheme consists of
\begin{itemize}
    \item  A secret $S$ uniformly distributed over $\{0, 1\}^{n_{s}};$
    \item  An encoder $ e : \{0, 1\}
^{n_{s}} \times \{0, 1\}^{n_{r}} \rightarrow \{0, 1\}^{n_{sh} L},(S, R)\mapsto {( H_{l})_{l\in \mathcal{L}}}$, which takes as input the secret $S$ and a randomization sequence $R$ uniformly distributed over $\{0, 1\}^{n_{r}}$ and independent of $S$, and outputs $L$ shares ${( H_{l})_{l\in \mathcal{L}}}$ of length $n_{sh}$. For any $\mathcal{S} \subseteq \mathcal{L}$, we define $H_{\mathcal{S}}={( H_{l})_{l\in \mathcal{S}}}; $ \end{itemize}
and satisfies the two conditions
\begin{align}
    \displaystyle \max_{\mathcal{T} \subseteq \mathcal{L}:|\mathcal{T}|=t} H(S|H_{\mathcal{T}})&=0, \phantom{-}  \text{(Recoverability)} \\
   \displaystyle \max_{\mathcal{U} \subseteq \mathcal{L}:|\mathcal{U}|\leq z } I(S;H_{\mathcal{U}})&\leq \alpha H(S). \phantom{-}\text{(Privacy leakage)} \label{eqpriv}
\end{align}
\end{definition}

\begin{thm}
(\cite{b23})  For a fixed secret length $n_{s}$, there exists an $(\alpha, t, z)$-secret sharing scheme such that the length of a share $n_{sh}$ and the length of the randomization sequence $n_{r}$ satisfy
 \begin{align*}
      n_{sh}=\max\left(\frac{1-\alpha}{t-z}, \frac{1}{t}\right) n_{s}, \quad
      n_{r}=\frac{[z-\alpha t]^+}{t-z} n_{s}.
 \end{align*}
\label{thm7}
\end{thm}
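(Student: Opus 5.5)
We would prove Theorem~\ref{thm7} by an explicit linear ramp construction based on Reed--Solomon codes over $\mathbb{F}_{2^m}$, where $m$ is the share length $n_{sh}$. We assume, as stated at the start of Section~\ref{sec3}, that all lengths are integers and that $2^m > L$. Then there are $L$ distinct nonzero evaluation points $x_1,\dots,x_L \in \mathbb{F}_{2^m}$. The secret $S$ and the randomness $R$ are each viewed as vectors of symbols in $\mathbb{F}_{2^m}$. For a polynomial $f$ of degree less than $t$, the share is $H_l = f(x_l)$. We split into two regimes according to whether $\alpha \geq z/t$.

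\textbf{Regime $\alpha \geq z/t$.} Here $\max(\frac{1-\alpha}{t-z},\frac1t)=\frac1t$ and $[z-\alpha t]^+=0$, so no randomness is used. Set $m=n_s/t$ and split $S$ into $t$ symbols $S_0,\dots,S_{t-1}$. Let $f(x)=\sum_{j<t}S_jx^j$.
\begin{itemize}
\item Recoverability: any $t$ shares determine $f$, since the corresponding $t\times t$ Vandermonde matrix is invertible.
\item Privacy: for $|\mathcal U|\le z$ we have $I(S;H_{\mathcal U})\le H(H_{\mathcal U})\le zm = \frac{z}{t}n_s\le\alpha n_s$.
\end{itemize}

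\textbf{Regime $\alpha<z/t$.} Set $m=\frac{1-\alpha}{t-z}n_s$ and $k_s \triangleq n_s/m = \frac{t-z}{1-\alpha}$, so that $S$ consists of $k_s$ symbols. Let $r\triangleq t-k_s$ be the number of random symbols, so $n_r = rm$. A short computation gives $n_r = \frac{t(1-\alpha)-(t-z)}{t-z}n_s=\frac{z-\alpha t}{t-z}n_s$, as claimed. Moreover $k_s\ge t-z$ gives $r\le z$, and $\alpha<z/t$ gives $k_s<t$, so $r\ge 1$. Place the randomness in the low-order coefficients:
\begin{align*}
f(x)=\sum_{i=0}^{r-1}R_ix^i+\sum_{j=0}^{k_s-1}S_jx^{r+j}.
\end{align*}
\begin{itemize}
\item Recoverability: $f$ has degree less than $t$, so any $t$ shares recover all coefficients, and in particular $S$.
\item Privacy: fix $\mathcal U$ with $|\mathcal U|=u\le z$. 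Then $I(S;H_{\mathcal U})=H(H_{\mathcal U})-H(H_{\mathcal U}|S)\le um-H(H_{\mathcal U}|S)$. Conditioned on $S$, the vector $H_{\mathcal U}$ equals a known shift plus $V_{\mathcal U}R$. Here $V_{\mathcal U}$ is the $u\times r$ Vandermonde matrix $(x_l^i)_{l\in\mathcal U,\,i<r}$, whose rank is $\min(u,r)$ because the evaluation points are distinct. Since $R$ is uniform, $H(H_{\mathcal U}|S)=\min(u,r)m$.
\end{itemize}

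This is the key step. It gives $I(S;H_{\mathcal U})\le[u-r]^+m\le(z-r)m$. Substituting $r=t-k_s$ and $k_sm=n_s$,
\begin{align*}
(z-r)m=(z-t)m+n_s=-(1-\alpha)n_s+n_s=\alpha n_s,
\end{align*}
which is the privacy leakage condition~\eqref{eqpriv}.

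The main point to check carefully is this rank argument: it requires that the random coefficients occupy a block whose evaluation matrix restricted to any $u\le z$ servers has full rank $\min(u,r)$. Placing them in the lowest $r$ monomials with distinct evaluation points guarantees this via Vandermonde, but only if $x_l\ne0$ for all $l$. This is why we choose nonzero points and need $2^m > L$. The only remaining issue is integrality of $m$, $k_s$ and $r$. We handle it by the paper's standing assumption on the lengths; for general rational $\alpha$ one could instead scale $n_s$ so that all quantities become integers.
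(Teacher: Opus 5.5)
\textbf{Gap: integrality of $k_s$ and $r$.} The paper does not prove this statement; it imports it from \cite{b23}, so your Reed--Solomon ramp construction has to stand on its own. Your regime $\alpha\ge z/t$ is correct. In the regime $\alpha<z/t$, the rank computation and the identities $n_r=rm$ and $(z-r)m=\alpha n_s$ are also correct, but only \emph{as long as} $k_s=\frac{t-z}{1-\alpha}$ is an integer. That is where the proof breaks. Neither $k_s$ nor $r=t-k_s$ depends on $n_s$, so your closing remedy of rescaling $n_s$ cannot make them integral. The paper's standing assumption only makes the displayed lengths $n_{sh}$ and $n_r$ integral. For example, take $t=3$, $z=2$, $\alpha=1/3$ and any $n_s$ divisible by $3$. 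The lengths $n_{sh}=2n_s/3$ and $n_r=n_s$ are integers, yet $k_s=r=3/2$, so no polynomial with a single $\mathbb{F}_{2^{n_{sh}}}$-symbol per share realizes the claimed parameters. As written, your construction only covers those $\alpha$ for which $\frac{t-z}{1-\alpha}$ is an integer.

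\textbf{Fix: several evaluation points per server.} Write $\frac{t-z}{1-\alpha}=p/q$ with positive integers $p,q$; in this regime $(t-z)q\le p<tq$. Take $n_s$ divisible by $p$ and set $m'\triangleq n_s/p$ with $2^{m'}\ge Lq$. Pick $Lq$ distinct points of $\mathbb{F}_{2^{m'}}$ and assign $q$ of them to each server. Use a polynomial of degree less than $tq$ whose $tq-p$ lowest coefficients are uniform random symbols and whose remaining $p$ coefficients are the secret symbols. Let $H_l$ be its evaluations at the $q$ points of Server $l$. Then:
\begin{itemize}
\item the share length is $|H_l|=qm'=\frac{1-\alpha}{t-z}n_s$;
\item the randomness has $(tq-p)m'=\frac{z-\alpha t}{t-z}n_s$ bits;
\item any $t$ servers hold $tq$ distinct evaluations and recover all coefficients;
\item for $|\mathcal U|=u\le z$, your Vandermonde argument applied to the $uq\times(tq-p)$ matrix gives
\begin{align*}
I(S;H_{\mathcal U})\le [uq-(tq-p)]^+m'\le \bigl(p-(t-z)q\bigr)m'=\alpha n_s.
\end{align*}
\end{itemize}
Your proof is the special case $q=1$.

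\textbf{Minor point.} Nonzero evaluation points are not needed. The random block occupies the monomials $x^0,\dots,x^{r-1}$, so $V_{\mathcal U}$ is an ordinary Vandermonde matrix of rank $\min(u,r)$ for any distinct points, including $0$.
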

\subsubsection{Coding scheme and analysis}   $\\$

\textit{Coding scheme}: Consider a file $F$ such that
$r^{(F )} = |F| = \min\left( \frac{t-z}{1-\alpha},t\right)n$. Then, the  user forms $(H_{l})_{l\in \mathcal{L}}$
with an $(\alpha, t, z)$ secret sharing scheme  from
Theorem~\ref{thm7} applied to $F$. By Theorem~\ref{thm7}, for $l \in \mathcal{L}$, the length of a share is $|H_{l}| = \max (\frac{1-\alpha}{t-z}, \frac{1}{t})\times \min\left( \frac{t-z}{1-\alpha},t\right)n = n$, and the length of
the randomization sequence is $n_{r} =\frac{[z-\alpha t]^+}{t-z} \times \min\left( \frac{t-z}{1-\alpha},t\right)n $. Hence,
since $|K_{l}| = n$, $l \in \mathcal{L}$, the  user can form $M_{l}\triangleq H_{l} \oplus K_{l}$ and publicly send it to Server $l$, where $\oplus$ denotes bitwise
modulo-two addition. Upon receiving $M_{l}$, Server $l$ stores
$S_{l} \triangleq K_{l} \oplus M_{l} = H_{l}$.

\textit{Resources needed:} For  $l \in \mathcal{L}$, the length of the randomization sequence at the user is $r^{(R)} = n_{r} = \frac{[z-\alpha t]^+}{1-\alpha}n$, the length of the public communication from the user to Server~$l$ is $ r^{(M)}_{l} = |M_{l}|= n$, and the storage needed at Server~$l$ is $r^{(S)}_{l}=
| M_{l}|= n.$

 \textit{Analysis of recoverability:} Consider an arbitrary subset $\mathcal{A} \subseteq \mathcal{L}$ of $t$ servers that pool their information, they
then have access to $(S_{l})_{l \in \mathcal{A}} = (H_{l})_{l \in \mathcal{A}}$ such that by
Theorem~\ref{thm7}, $H(F|(H_{l})_{l \in \mathcal{A}}) = 0$ since $|\mathcal{A}|= t.$

\textit{Analysis of Security:}
We have
\begin{align*}
    I(F;M)&\stackrel{(a)}{=}I(F; H_{\mathcal{L}} \oplus K_{\mathcal{L}})\\
    & =H(H_{\mathcal{L}} \oplus K_{\mathcal{L}})-H(H_{\mathcal{L}} \oplus K_{\mathcal{L}}|F)\\
    &\leq |K_{\mathcal{L}}|-H(H_{\mathcal{L}} \oplus K_{\mathcal{L}}|F)\\
        &\stackrel{(b)}{\leq} |K_{\mathcal{L}}|-H(H_{\mathcal{L}} \oplus K_{\mathcal{L}}|F,H_{\mathcal{L}})\\
     &= |K_{\mathcal{L}}|-H( K_{\mathcal{L}}|F,H_{\mathcal{L}})\\
     &\stackrel{(c)}{=} |K_{\mathcal{L}}|-H(K_{\mathcal{L}})\\
     &\stackrel{(d)}{=}0,
\end{align*}
where \begin{enumerate}[(a)]
    \item 
holds because $M= H_{\mathcal{L}} \oplus K_{\mathcal{L}}$;
\item  holds   because conditioning reduces entropy;
\item holds   because the keys $K_{\mathcal{L}}$ are  independent of $(F,H_{\mathcal{L}})$; \item holds by  uniformity of $K_{\mathcal{L}}$.
\end{enumerate}

\textit{Analysis of privacy leakage:}  
Then, consider an arbitrary subset
$\mathcal{U} \subseteq \mathcal{L}$ of $z$ colluding servers. Then, we have
\begin{align*}
    I(F;M,K_{\mathcal{U}}) 
    & \stackrel{(a)}{=} I(F;H_{\mathcal{U}},M_{\mathcal{U}^c},K_{\mathcal{U}})\\
    & \stackrel{(b)}{=} I(F;H_{\mathcal{U}})+I(F;M_{\mathcal{U}^c},K_{\mathcal{U}}|H_{\mathcal{U}})\\
     & \stackrel{(c)}{\leq} \alpha H(F)+ I(F;M_{\mathcal{U}^c},K_{\mathcal{U}}|H_{\mathcal{U}})\\
     & \stackrel{(d)}{=}\alpha H(F)+I(F;K_{\mathcal{U}}|H_{\mathcal{U}})+I(F;M_{\mathcal{U}^c}|H_{\mathcal{U}},K_{\mathcal{U}})\\
     & \stackrel{(e)}{=} \alpha H(F)+ I(F;M_{\mathcal{U}^c}|H_{\mathcal{U}},K_{\mathcal{U}})\\
      & \leq \alpha H(F)+ I(F,H_{\mathcal{U}^c};M_{\mathcal{U}^c}|H_{\mathcal{U}},K_{\mathcal{U}})\\
      & \stackrel{}{=} \alpha H(F)+ H(M_{\mathcal{U}^c}|H_{\mathcal{U}},K_{\mathcal{U}})-H(M_{\mathcal{U}^c}|F,H_{\mathcal{U}^c},H_{\mathcal{U}},K_{\mathcal{U}})\\
      & \leq  \alpha H(F)+ |M_{\mathcal{U}^c}|-H(M_{\mathcal{U}^c}|F,H_{\mathcal{U}^c},H_{\mathcal{U}},K_{\mathcal{U}})\\
      & \stackrel{(f)}{=} \alpha H(F)+ |M_{\mathcal{U}^c}|-H(K_{\mathcal{U}^c}|F,H_{\mathcal{U}^c},H_{\mathcal{U}},K_{\mathcal{U}})\\
      & \stackrel{(g)}{=} \alpha H(F)+|M_{\mathcal{U}^c}|-H(K_{\mathcal{U}^c})\\
      & \stackrel{(h)}{=} \alpha H(F),
\end{align*}
where \begin{enumerate}[(a)]
    \item  holds because $M=(M_{\mathcal{U}^c},(H_{l}\oplus K_{l})_{l \in \mathcal{U}})$; \item and (d) hold by the chain rule; 
   \item holds by Theorem~\ref{thm7}; \addtocounter{enumi}{1} 
    \item holds because $(F,H_{\mathcal{U}})$ and $K_{\mathcal{U}}$ are independent;
    \item holds because $M_{\mathcal{U}^c} = (H_{\mathcal{U}^c} \oplus K_{\mathcal{U}^c})$; \item holds because the keys are jointly independent and independent of $(F, H_{\mathcal{L}})$; \item holds by uniformity of $K_{\mathcal{U}^c}$ and  $|K_{\mathcal{U}^c}|=|M_{\mathcal{U}^c}|$.
    \end{enumerate}

\section{Extension to a multilevel privacy constraint} \label{sec2b}
In this section, we extend the problem formulation of Section~\ref{sec2} to a multilevel privacy constraint setting. Specifically, Definition \ref{def1} remains unchanged and Definition \ref{def2a} is replaced as follows.

\begin{definition} Fix $  t \in \llbracket 1,L \rrbracket$, $\boldsymbol{z}\triangleq \left(z_{d}\right)_{d \in \llbracket 1, D\rrbracket}   \in  \llbracket 1, t-1 \rrbracket^{D}  $, $ \boldsymbol{\alpha}\triangleq \left(\alpha_{d}\right)_{d \in \llbracket 1, D\rrbracket} \in ~[0,1]^D\cap~\mathbb{Q}^D $. Then, $r^{(F)}$ is $(\boldsymbol{\alpha}, t, \boldsymbol{z})$-achievable if there exists a
$(2^{r^{(F)}},2^{r^{(R)}},(2^{r^{(M)}_{l}})_{l \in \mathcal{L}}, (2^{r^{(S)}_{l}})_{l \in \mathcal{L}}) $
private file storage strategy such that
\begin{align}
  \max_{\forall \mathcal A \subseteq \mathcal{L}:|\mathcal{A}| \geq t } H(F|\hat F(\mathcal{A}))&=0,\phantom{-} \label{marker1b}\\
\forall d \in \llbracket 1, D\rrbracket, \max_{\forall \mathcal {U}  \subseteq \mathcal{L}:|\mathcal{U}| \leq z_{d} }  \frac{I(F;M,K_{\mathcal{U}})}{H(F)} &\leq \alpha_{d},\phantom{-}  \label{marker2b}\\
  I(F;M)&=0.\phantom{-} \label{marker3b} 
\end{align}
\end{definition}
Note that the setting of Section \ref{sec2} is recovered by setting $D=1$ in \eqref{marker2b}. The constraint \eqref{marker2b} means that for any $d \in \llbracket 1, D\rrbracket$, any set of $z_d$ colluding servers cannot learn more than a fraction $\alpha_d$ of the file. Such a constraint offers a finer control of information leakage compared to the case $D=1$ treated in Section \ref{sec2}.

Similar to Definition \ref{def2}, we define $r^{(F)}_{*}(\boldsymbol{\alpha}, t, \boldsymbol{z})$ as the largest file size that the user can privately store under the constraints \eqref{marker1b}-\eqref{marker3b}. Then,    $r^{(M)}_{l,*} (\boldsymbol{\alpha}, t, \boldsymbol{z})$,  $r^{(M)}_{\sum,*} (\boldsymbol{\alpha}, t, \boldsymbol{z}) $, and $r^{(S)}_{*}(\boldsymbol{\alpha}, t, \boldsymbol{z})$, $l \in \mathcal{L}$, as  the minimum amount of public communication to Server~$l$, the minimum amount of public communication to all the servers, and the minimum storage size required at Server~$l$, respectively, needed for the user to achieve $r^{(F)}_{*} (\boldsymbol{\alpha}, t, \boldsymbol{z})$. 

\subsection{Impossibility Results} \label{ssecconv}
Let $  t \in \llbracket 1,L \rrbracket$, $\boldsymbol{z}\triangleq \left(z_{d}\right)_{d \in \llbracket 1, D\rrbracket}   \in  \llbracket 1, t-1 \rrbracket^{D}  $, $ \boldsymbol{\alpha}\triangleq \left(\alpha_{d}\right)_{d \in \llbracket 1, D\rrbracket} \in ~[0,1]^D\cap~\mathbb{Q}^D $. For $d=0$, define also $\alpha_{0} \triangleq 0$ and $z_{0} \triangleq 0$.
\begin{thm}
(Converse on the file length). We have \label{thm1b}
\begin{align*}
    r^{(F)}_{*}(\boldsymbol{\alpha}, t, \boldsymbol{z}) \leq \min_{d \in \llbracket 0, D\rrbracket}\left(\frac{t-z_{d}}{1-\alpha_{d}}\right)n.
\end{align*}\end{thm}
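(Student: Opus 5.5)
The plan is to prove, for each fixed $d \in \llbracket 0, D\rrbracket$, the single-level bound $r^{(F)} \leq \frac{t-z_d}{1-\alpha_d}n$, and then take the minimum over $d$. The case $d=0$ ($z_0=0$, $\alpha_0=0$) gives $r^{(F)}\le tn$. For $d\ge 1$, the multilevel constraint \eqref{marker2b} at level $d$ is exactly the privacy-leakage constraint \eqref{marker2} with $(\alpha,z)=(\alpha_d,z_d)$. Reliability and security are the same as before. So it suffices to give a converse for a single pair $(z,\alpha)$ that uses only these constraints; this is also how Theorem~\ref{thm1} is obtained by setting $D=1$.

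\textbf{Bound $tn$.} Fix $\mathcal{A}$ with $|\mathcal{A}|=t$. By reliability and the data-processing inequality, since $S_{\mathcal{A}}$ is a function of $(M,K_{\mathcal{A}})$,
\begin{align*}
H(F)&= I(F;M,K_{\mathcal{A}}) = I(F;M)+I(F;K_{\mathcal{A}}|M) \le H(K_{\mathcal{A}}) = tn,
\end{align*}
where we used $I(F;M)=0$ from \eqref{marker3b}.

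\textbf{Bound $\frac{t-z}{1-\alpha}n$.} Choose $\mathcal{A}$ with $|\mathcal{A}|=t$ and split it as $\mathcal{A}=\mathcal{U}\cup\mathcal{V}$ with $|\mathcal{U}|=z$ and $|\mathcal{V}|=t-z$. Then
\begin{align*}
H(F) &= I(F;M,K_{\mathcal{U}}) + I(F;K_{\mathcal{V}}|M,K_{\mathcal{U}}) \\
&\le \alpha H(F) + H(K_{\mathcal{V}}) = \alpha H(F) + (t-z)n .
\end{align*}
Here the first equality is the chain rule together with reliability, $H(F|M,K_{\mathcal{A}})=0$. The inequality uses the privacy constraint for $\mathcal{U}$ and the bound $I(F;K_{\mathcal{V}}|\cdot)\le H(K_{\mathcal{V}})$. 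Since $H(F)=r^{(F)}$, this gives $(1-\alpha)r^{(F)}\le (t-z)n$. When $\alpha<1$, this is the claimed bound. When $\alpha_d=1$ the term is $+\infty$ and imposes no constraint, so we interpret it that way; the $d=0$ term still keeps the minimum finite.

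\textbf{Main subtlety.} The main point to check is that each $d$ can be handled independently. Every inequality above uses only the level-$d$ constraint, so any strategy satisfying all $D$ constraints satisfies each one separately, and the bound holds for every $d$ simultaneously. Taking the minimum over $d\in\llbracket 0,D\rrbracket$, and then the supremum over achievable $r^{(F)}$, yields the theorem.
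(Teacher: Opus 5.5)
Your proof is correct and follows the paper's argument. For a $t$-set $\mathcal{A}\supset\mathcal{U}$ with $|\mathcal{U}|=z_d$, you split $H(F)$ into the leakage $I(F;M,K_{\mathcal{U}})\le\alpha_d H(F)$ plus a remainder bounded by the entropy $(t-z_d)n$ of the keys in $\mathcal{A}\setminus\mathcal{U}$, with $d=0$ handled by security; your bound $I(F;K_{\mathcal{V}}|M,K_{\mathcal{U}})\le H(K_{\mathcal{V}})$ is a shortcut for the paper's longer chain through $I(K_{\mathcal{A}};K_{\mathcal{L}},F,R|K_{\mathcal{U}})$, and your remark on $\alpha_d=1$ covers a division by $1-\alpha_d$ that the paper leaves implicit.
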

\begin{proof} See Appendix \ref{appendix:a}.\end{proof}
Theorem~\ref{thm1b} means that it is impossible for the user to store a file of length larger than $\displaystyle\min_{d \in \llbracket 0, D\rrbracket}\left(\frac{t-z_{d}}{1-\alpha_{d}}, t\right)n$ bits.
\begin{thm} (Converse on storage size requirement at the server). We have \label{thm2b}
\begin{align*}
   & r^{(S)}_{l,*}(\boldsymbol{\alpha}, t, \boldsymbol{z})\geq n, \forall  l \in \mathcal{L}.
\end{align*}\end{thm}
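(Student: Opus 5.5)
Fix an arbitrary server $l \in \mathcal{L}$ and an arbitrary strategy that $(\boldsymbol{\alpha},t,\boldsymbol{z})$-achieves $r^{(F)}_{*}(\boldsymbol{\alpha},t,\boldsymbol{z})$. The bound will follow from $r^{(S)}_l \geq H(S_l)$ together with a lower bound on the information that $S_l$ supplies about $F$ beyond what the public messages reveal. Since $t \leq L$, I will choose a set $\mathcal{A}$ with $|\mathcal{A}|=t$ and $l\in\mathcal{A}$, and write $\mathcal{A}' \triangleq \mathcal{A}\setminus\{l\}$, so that $|\mathcal{A}'| = t-1$. The idea is to compare what $F$ looks like given $(M, K_{\mathcal{A}'})$ with what it looks like given $(M, K_{\mathcal{A}})$. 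By reliability and the fact that $S_{\mathcal{A}}$ is a function of $(M_{\mathcal{A}},K_{\mathcal{A}})$, we have $H(F|M,K_{\mathcal{A}})=0$. Hence
\begin{align*}
H(F|M,K_{\mathcal{A}'}) = I(F;K_l|M,K_{\mathcal{A}'}) \leq H(K_l) = n,
\end{align*}
which says that the single key $K_l$ resolves all remaining uncertainty. This already gives the flavor of the argument, but it bounds the key, not the storage.

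To get at the storage, I will use $S_l$ in place of $K_l$. Since $S_{\mathcal{A}'}$ is a function of $(M,K_{\mathcal{A}'})$, reliability yields $H(F|M,K_{\mathcal{A}'},S_l)=0$. Therefore
\begin{align*}
H(F|M,K_{\mathcal{A}'}) = I(F;S_l|M,K_{\mathcal{A}'}) \leq H(S_l) \leq r^{(S)}_l .
\end{align*}
It then remains to show that $H(F|M,K_{\mathcal{A}'}) \geq n$ at the optimal file length. I will split $H(F|M,K_{\mathcal{A}'}) = H(F) - I(F;M,K_{\mathcal{A}'})$. If $t-1 \leq z_d$ for some $d$, the leakage constraint caps $I(F;M,K_{\mathcal{A}'})$ at $\alpha_d H(F)$. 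In general, however, it does not, so this direct route is too weak. The robust route is to lower bound $H(F|M,K_{\mathcal{A}'})$ using the optimality of the file length. For this I plan to combine security, $I(F;M)=0$, with the chain of key-peeling used in Appendix~\ref{appendix:a}: removing keys one at a time from a reliable set can reduce the uncertainty about $F$ by at most $n$ per key. This gives $I(F;M,K_{\mathcal{A}'}) \leq I(F;M,K_{\mathcal{U}}) + (t-1-|\mathcal{U}|)n$ for any $\mathcal{U}\subseteq\mathcal{A}'$ with $|\mathcal{U}|=z_d$, and hence
\begin{align*}
H(F|M,K_{\mathcal{A}'}) \geq (1-\alpha_d)H(F) - (t-1-z_d)n .
\end{align*}
Plugging in $H(F)=r^{(F)}_* = \min_d \frac{t-z_d}{1-\alpha_d} n$ (the achievability of this value being taken from the multilevel counterpart of Theorem~\ref{thm6}) gives exactly $n$ when evaluated at the minimizing index $d$.

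The main obstacle is this last step. The inequality above must be evaluated at the index $d^\star$ that attains the minimum, where $(1-\alpha_{d^\star})\,r^{(F)}_* = (t-z_{d^\star})n$, so that the right-hand side becomes $(t-z_{d^\star})n-(t-1-z_{d^\star})n = n$. The case $d^\star=0$, with $\alpha_0=z_0=0$, is handled the same way using $\mathcal{U}=\emptyset$ and security, and it yields $H(F)-(t-1)n = n$. If relying on the achievability of $r^{(F)}_*$ is uncomfortable, I would instead argue via the supremum and a limiting argument, or more simply state the bound for any strategy achieving $r^{(F)}_*$, which is what the infimum in the definition of $r^{(S)}_{l,*}$ requires. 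Once this is done, taking the infimum over admissible tuples gives $r^{(S)}_{l,*}(\boldsymbol{\alpha},t,\boldsymbol{z}) \geq n$ for every $l$.
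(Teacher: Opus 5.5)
Your argument is correct, but it takes a genuinely different route from the paper. The paper's proof (Appendix~\ref{appendix:b}) is a one-line modeling argument. Server~$l$ must hold the $n$-bit key $K_l$ before $M_l$ arrives, so its storage space is at least $n$ bits. This holds for every strategy and every file length, and it uses neither reliability, leakage, nor optimality of $r^{(F)}$.

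You instead lower bound the entropy of the stored content itself. First, $r^{(S)}_l \geq H(S_l) \geq I(F;S_l|M,K_{\mathcal{A}'}) = H(F|M,K_{\mathcal{A}'})$ by reliability. Second, $H(F|M,K_{\mathcal{A}'}) \geq (1-\alpha_d)H(F)-(t-1-z_d)n$, obtained by peeling off the $t-1-z_d$ keys outside a $z_d$-subset and applying \eqref{marker2b}, or \eqref{marker3b} when $d=0$. All of these steps are sound.

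Your route has two advantages. It stays strictly inside the formal model of Definition~\ref{def1}, where $S_l=g_l(M_l,K_l)$ is only the final stored value and nothing formally forces the key to occupy the same memory. It also gives a stronger conclusion: even the post-protocol share must carry $n$ bits. Along the way you get the interpolating bound $r^{(S)}_l \geq \max_{d\in\llbracket 0,D\rrbracket}[(1-\alpha_d)r^{(F)}-(t-1-z_d)n]$ for any achievable $r^{(F)}$.

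The price is that the value $n$ appears only when $r^{(F)}=\min_{d}\frac{t-z_d}{1-\alpha_d}n$ exactly. So you genuinely need $r^{(F)}_*$ to meet the converse of Theorem~\ref{thm1b}, i.e., the achievability of Theorem~\ref{thm6b}, hence Theorem~\ref{thm7b} and the paper's integrality assumptions. This is not circular, since Theorem~\ref{thm6b} does not use the storage converse.

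However, your proposed fallback via ``the supremum and a limiting argument'' would not remove this dependence. Below the optimal file length, the stored content can genuinely be shorter than $n$ bits. For example, with $L=t=2$, $z_1=1$, $\alpha_1=0$, a one-bit file can be split into one-bit shares, each padded with one key bit. So the exact value of $r^{(F)}_*$ is indispensable to your argument, whereas the paper's argument does not depend on it at all.
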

\begin{proof} See Appendix  \ref{appendix:b}.\end{proof}
Theorem~\ref{thm2b} means that  Server~$l \in \mathcal{L}$ needs a storage capacity of at least $n$ bits.
\begin{thm}(Converse on the total amount of public communication to the servers). We have \label{thm3b}
\begin{align} 
    &\max_{d \in \llbracket 0, D\rrbracket} \left( \frac{1-\alpha_{d}}{t-z_{d}}\right)L \times r_{*}^{(F)}(\boldsymbol{\alpha}, t, \boldsymbol{z})\leq r_{\sum,*}^{(M)}(\boldsymbol{\alpha}, t, \boldsymbol{z}).\label{marker9b}
\end{align}
\end{thm}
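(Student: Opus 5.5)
We prove Theorem~\ref{thm3b} by showing that, for every $d \in \llbracket 0, D\rrbracket$ and every strategy achieving $r^{(F)}=r^{(F)}_*(\boldsymbol{\alpha},t,\boldsymbol{z})$, the bound $\sum_{l\in\mathcal{L}} r^{(M)}_l \geq \frac{1-\alpha_d}{t-z_d} L\, H(F)$ holds. Taking the maximum over $d$ then gives \eqref{marker9b}. The case $d=0$ (with $\alpha_0=z_0=0$) yields the term $L\,r^{(F)}/t$. Fix $d$. The core of the argument is a per-subset bound: for every $\mathcal{A}\subseteq\mathcal{L}$ with $|\mathcal{A}|=t$ and every $\mathcal{U}\subseteq\mathcal{A}$ with $|\mathcal{U}|=z_d$,
\begin{align*}
(1-\alpha_d)H(F) \leq H(M_{\mathcal{A}\setminus\mathcal{U}}).
\end{align*}
To prove it, we first use reliability \eqref{marker1b}: $S_{\mathcal{A}}$ is a function of $(M_{\mathcal{A}},K_{\mathcal{A}})$, so $H(F)=I(F;M,K_{\mathcal{A}})$. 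We then split this mutual information as
\begin{align*}
I(F;M,K_{\mathcal{U}})+I(F;K_{\mathcal{A}\setminus\mathcal{U}}\mid M,K_{\mathcal{U}}).
\end{align*}
The first term is at most $\alpha_d H(F)$ by \eqref{marker2b}.

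The second term must be bounded by $H(M_{\mathcal{A}\setminus\mathcal{U}})$. For this we use the independence structure: $K_{\mathcal{A}\setminus\mathcal{U}}$ is independent of $(F,R,K_{\mathcal{U}},K_{\mathcal{L}\setminus(\mathcal{A}\setminus\mathcal{U})})$, and therefore of $(F,K_{\mathcal{U}},M_{\mathcal{L}\setminus(\mathcal{A}\setminus\mathcal{U})})$. Writing $\mathcal{B}=\mathcal{A}\setminus\mathcal{U}$ and $X=(F,K_{\mathcal{U}},M_{\mathcal{B}^c})$, we get
\begin{align*}
I(F;K_{\mathcal{B}}\mid M,K_{\mathcal{U}}) \leq I(F,M_{\mathcal{B}^c},K_{\mathcal{U}};K_{\mathcal{B}}\mid M_{\mathcal{B}}) = I(X;K_{\mathcal{B}}\mid M_{\mathcal{B}}).
\end{align*}
We expand this as $H(K_{\mathcal{B}}\mid M_{\mathcal{B}})-H(K_{\mathcal{B}}\mid X,M_{\mathcal{B}})$. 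Independence gives $H(K_{\mathcal{B}}\mid X)=H(K_{\mathcal{B}})$, so $H(K_{\mathcal{B}}\mid X,M_{\mathcal{B}})\geq H(K_{\mathcal{B}})-H(M_{\mathcal{B}})$. Combined with $H(K_{\mathcal{B}}\mid M_{\mathcal{B}})\leq H(K_{\mathcal{B}})$, the second term is at most $H(M_{\mathcal{B}})$, as needed. This is the key chain; it is essentially the $D=1$ argument with a general pair $(z_d,\alpha_d)$. The security condition \eqref{marker3b} is not needed for this bound. Since $H(M_{\mathcal{B}})\leq \sum_{l\in\mathcal{B}} r_l^{(M)}$, the per-subset bound becomes
\begin{align*}
(1-\alpha_d)H(F)\leq \sum_{l\in\mathcal{B}} r_l^{(M)}
\end{align*}
for every $\mathcal{B}\subseteq\mathcal{L}$ of size $t-z_d$. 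Such a $\mathcal{B}$ can always be completed to a size-$t$ set $\mathcal{A}$ with $\mathcal{U}=\mathcal{A}\setminus\mathcal{B}$, because $t\leq L$.

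The final step averages over all $(t-z_d)$-subsets $\mathcal{B}$ of $\mathcal{L}$. Each server appears in exactly $\binom{L-1}{t-z_d-1}$ of the $\binom{L}{t-z_d}$ subsets. Summing the subset inequalities therefore gives
\begin{align*}
\binom{L}{t-z_d}(1-\alpha_d)H(F)\leq \binom{L-1}{t-z_d-1}\sum_{l\in\mathcal{L}} r_l^{(M)},
\end{align*}
which simplifies to $\sum_{l} r_l^{(M)}\geq \frac{L(1-\alpha_d)}{t-z_d}H(F)$. Using $H(F)=r^{(F)}_*$, taking the infimum over strategies and then the maximum over $d$ completes the proof. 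The main obstacle is the middle step: bounding the key-conditioned leakage term by $H(M_{\mathcal{B}})$ requires careful handling of conditioning on all of $M$. Moving $M_{\mathcal{B}^c}$ into the first argument of the mutual information is what makes the independence of $K_{\mathcal{B}}$ usable.
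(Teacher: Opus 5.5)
Your proof is correct. Its skeleton matches the paper's: a per-subset bound of the type in Lemma~\ref{lem2} for every $(t-z_d)$-subset, then symmetrization over subsets. The derivation of the per-subset bound is different, though.

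The paper adds $\sum_{l\in\mathcal{S}}H(K_l)$ to both sides and conditions only on the coalition's own view $(M_{\mathcal{T}},K_{\mathcal{T}})$. It then splits $H(M_{\mathcal{S}},K_{\mathcal{S}}|M_{\mathcal{T}},K_{\mathcal{T}})$ into two parts. The first is an information term about $F$, controlled by reliability and by $I(F;M_{\mathcal{T}},K_{\mathcal{T}})\le I(F;M,K_{\mathcal{T}})\le\alpha_d H(F)$. The second is a residual term that is at least $H(K_{\mathcal{S}})$. Finally the key entropies cancel.

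You instead condition on the full transcript and write $H(F)=I(F;M,K_{\mathcal{U}})+I(F;K_{\mathcal{B}}|M,K_{\mathcal{U}})$. This is the decomposition underlying Appendix~\ref{appendix:a} for Theorem~\ref{thm1b}, where the same second term is bounded by $H(K_{\mathcal{B}})=n(t-z_d)$. You bound it instead by $H(M_{\mathcal{B}})$, using the independence of $K_{\mathcal{B}}$ from $(F,K_{\mathcal{U}},M_{\mathcal{B}^c})$.

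Your route has three advantages. It gets both the file-length converse and the sum-communication converse from one quantity, upper-bounded by key entropy in one case and by message entropy in the other. It applies the leakage constraint exactly in its stated form. Its averaging (each server lies in $\binom{L-1}{t-z_d-1}$ of the $\binom{L}{t-z_d}$ subsets) is simpler than the paper's nested sum over pairs $(\mathcal{T},\mathcal{S})$ with a combinatorial identity applied twice. The paper's template, in turn, buys reuse: its ``add the key entropies, split off the information about $F$, cancel'' chain is carried over, with extra conditioning on $(M_{\mathcal{V}},K_{\mathcal{V}})$, to Lemmas~\ref{lem3} and~\ref{lem7}.

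One small correction. For $d=0$ you have $\mathcal{U}=\emptyset$, so your first term is $I(F;M)$. Condition \eqref{marker2b} ranges only over $d\in\llbracket 1,D\rrbracket$ and says nothing about this term. You need the security condition \eqref{marker3b} there, just as the paper uses it in step (e) of Lemma~\ref{lem2}. So your remark that \eqref{marker3b} is not needed holds only for $d\ge 1$. The $d=0$ bound can in fact be obtained without security, via $H(F)=I(F;M_{\mathcal{A}}|K_{\mathcal{A}})\le H(M_{\mathcal{A}})$, but not through your decomposition as written.
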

\begin{proof} See Appendix \ref{appendix:c}.\end{proof}
Theorem~\ref{thm3b} means that it is impossible for the user to store a file of length $r^{(F)}_{*}(\boldsymbol{\alpha}, t, \boldsymbol{z})$ if the public communication sum length to the servers is smaller than $\displaystyle\max_{d \in \llbracket 0, D\rrbracket} \left( \frac{1-\alpha_{d}}{t-z_{d}}\right)L \times r_{*}^{(F)}(\boldsymbol{\alpha}, t, \boldsymbol{z})$ bits.

\begin{thm}(Converse on the amount of public communication
to an individual server).  
 If the leakage symmetry condition~\eqref{marker10} holds, then we have
\begin{align*}
   \max_{d \in \llbracket 0, D\rrbracket}\left(\frac{1-\alpha_{d}}{t-z_{d}}\right) \times r^{(F)}_* (\boldsymbol{\alpha}, t, \boldsymbol{z})\leq r_{l,*}^{(M)} (\boldsymbol{\alpha}, t, \boldsymbol{z}),\forall l \in \mathcal{L}. 
\end{align*} \label{thm4b}
\end{thm}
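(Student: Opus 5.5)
The plan is to fix an arbitrary level $d \in \llbracket 0, D\rrbracket$, show that $r^{(M)}_l \geq \frac{1-\alpha_d}{t-z_d} H(F)$ for every $l$, and then take the maximum over $d$. Throughout, $H(F)=r^{(F)}$ since $F$ is uniform, and we evaluate at $r^{(F)}=r^{(F)}_*(\boldsymbol{\alpha},t,\boldsymbol{z})$. The level $d=0$ (with $z_0=0$, $\alpha_0=0$) gives the bound $\frac{1}{t}$. The argument mirrors the sum-rate converse of Theorem~\ref{thm3b}: we first obtain a bound on the communication to a set of $t-z_d$ servers, and then use \eqref{marker10} to spread it evenly over individual servers.

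\textbf{Step 1 (a bound for a set of $t-z_d$ servers).} Fix a set $\mathcal{A}$ with $|\mathcal{A}|=t$, split as $\mathcal{A}=\mathcal{U}\cup\mathcal{V}$ with $|\mathcal{U}|=z_d$ and $|\mathcal{V}|=t-z_d$.
\begin{itemize}
\item By reliability \eqref{marker1b}, and since $S_{\mathcal{A}}$ is a function of $(M_{\mathcal{A}},K_{\mathcal{A}})$, we get $H(F)=I(F;M_{\mathcal{A}},K_{\mathcal{A}})$.
\item By the chain rule, this equals $I(F;M_{\mathcal{U}},K_{\mathcal{U}})+I(F;M_{\mathcal{V}},K_{\mathcal{V}}\mid M_{\mathcal{U}},K_{\mathcal{U}})$.
\item The first term is at most $I(F;M,K_{\mathcal{U}})\leq \alpha_d H(F)$ by \eqref{marker2b}; for $d=0$ it is zero by \eqref{marker3b} together with the independence of the keys.
\item For the second term, we would like to bound it by $H(M_{\mathcal{V}})$ after removing the keys $K_{\mathcal{V}}$. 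We write it as at most $I(F;M_{\mathcal{V}}\mid M_{\mathcal{U}},K_{\mathcal{U}})+I(F;K_{\mathcal{V}}\mid M_{\mathcal{L}},K_{\mathcal{U}})$.
\end{itemize}
The key-removal step is the delicate point. Conditioning on all of $M$ is the trick used in the earlier appendices. Security \eqref{marker3b} gives $I(F;M)=0$, and the keys $K_{\mathcal{V}}$ are independent of $(F,K_{\mathcal{U}})$. From these facts we expect to show $I(F;K_{\mathcal{V}}\mid M,K_{\mathcal{U}}) \le I(F;M_{\mathcal{V}}\mid M_{\mathcal{V}^c},K_{\mathcal{U}})$ or a similar bound. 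An alternative is to replace the entire argument by $I(F;M_{\mathcal{V}},K_{\mathcal{V}}\mid \cdot)\le H(M_{\mathcal{V}})+H(K_{\mathcal{V}}\mid M_{\mathcal{V}},\dots)$, and then use the secrecy of $M$ to argue that the keys contribute only through $M_{\mathcal{V}}$. The target of this step is
\begin{align*}
(1-\alpha_d)H(F)\le H(M_{\mathcal{V}})\le \sum_{l\in\mathcal{V}} r^{(M)}_l .
\end{align*}
This step is the main obstacle, and we would follow exactly the chain used in Appendix~\ref{appendix:c} for the same quantity.

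\textbf{Step 2 (passing to a single server).} A bound on a sum over $\mathcal{V}$ does not by itself bound each individual term, so here we use \eqref{marker10}. Define $g(k)\triangleq I(F;M_{\mathcal{W}},K_{\mathcal{W}})$ for any $|\mathcal{W}|=k$; this is well defined by \eqref{marker10}. Instead of working with the set $\mathcal{V}$ as a whole, we peel off one server at a time, writing
\begin{align*}
H(F)-g(z_d)=\sum_{k=z_d}^{t-1}\bigl(g(k+1)-g(k)\bigr).
\end{align*}
Each increment $g(k+1)-g(k)$ is the information gained from adding a single server $l$, and by symmetry it is the same for every choice of $l$. As in Step 1, each increment is bounded by $H(M_l)\le r^{(M)}_l$.

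We then need to show that the increments are nondecreasing in $k$, so that the last increment is at least their average. We expect this supermodularity-type property to follow from an averaging argument in the style of Han's inequality over all subsets of a given size, which symmetry makes available. It yields
\begin{align*}
r^{(M)}_l\ge g(t)-g(t-1)\ge \frac{H(F)-g(z_d)}{t-z_d}\ge\frac{(1-\alpha_d)H(F)}{t-z_d}.
\end{align*}
If the monotonicity of the increments proves difficult, the fallback is to apply the Step 1 bound to every $\mathcal{V}$ containing $l$ and average over all such $\mathcal{V}$ using symmetry. Finally, we take the maximum over $d$.
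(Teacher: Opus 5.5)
Your Step 2 fails where you turn the telescoping sum into a single-server bound.

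The claim that the increments $g(k+1)-g(k)$ are nondecreasing on $\llbracket z_d,t-1\rrbracket$ is false in general. It also cannot come from a Han-type averaging. Han's inequality and submodularity concern joint entropies, and under symmetry they point toward \emph{diminishing} increments. The map $k\mapsto I(F;M_{\mathcal{W}},K_{\mathcal{W}})$ is a difference of two such functions and has no definite curvature.

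Concretely, take $L=t=3$, $z_d=1$, $\alpha_d=0$, an $n$-bit file, a $(2,3)$-threshold scheme with $n$-bit shares $H_l$ (e.g., Shamir over $\mathrm{GF}(2^n)$, $n\ge 2$), and $M_l=H_l\oplus K_l$. This is a valid strategy satisfying \eqref{marker10}, with $g(1)=0$ and $g(2)=g(3)=n$. Hence $g(t)-g(t-1)=0<\frac{H(F)-g(z_d)}{t-z_d}=\frac{n}{2}$, and your chain yields only $r^{(M)}_l\ge 0$. Your argument never uses optimality of $r^{(F)}$, so it would have to work for this strategy too.

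The fallback does not help either. Averaging $\sum_{l'\in\mathcal{V}}r^{(M)}_{l'}\ge(1-\alpha_d)H(F)$ over all $\mathcal{V}\ni l$ controls only $r^{(M)}_l$ plus a positive multiple of $\sum_{l'\ne l}r^{(M)}_{l'}$. Condition \eqref{marker10} constrains mutual informations, not the individual $H(M_{l'})$, so the other servers can absorb the bound.

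The repair uses what you already have. You do not need any specific increment to be large, because \emph{every} increment is at most $H(M_l)$. For $k\in\llbracket z_d,t-1\rrbracket$, choose $\mathcal{W}\subseteq\mathcal{L}\setminus\{l\}$ with $|\mathcal{W}|=k$. By \eqref{marker10},
$g(k+1)-g(k)=I(F;M_l,K_l\mid M_{\mathcal{W}},K_{\mathcal{W}})\le H(M_l)+H(K_l)-H(K_l\mid F,R,K_{\mathcal{W}})=H(M_l)$.
This holds because $M_{\mathcal{W}}$ is a function of $(F,R,K_{\mathcal{W}})$ and $K_l$ is uniform and independent of $(F,R,K_{\mathcal{W}})$. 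The key is removed by this entropy argument, not by security. The same device makes your Step 1 rigorous, whereas your proposed split through $I(F;K_{\mathcal{V}}\mid M_{\mathcal{L}},K_{\mathcal{U}})$ is unjustified.

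Summing the $t-z_d$ increments gives $(t-z_d)H(M_l)\ge g(t)-g(z_d)\ge(1-\alpha_d)H(F)$. In words, $H(M_l)$ dominates the largest increment, which dominates the average. With this repair, your Step 1 becomes unnecessary.

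This is exactly what the paper extracts, but by a longer route. It telescopes $H(M_l,K_l\mid M_{\mathcal{V}_i},K_{\mathcal{V}_i})$ along a nested chain of sets. It rewrites each difference through second differences $2\beta_{i+1}-\beta_i-\beta_{i+2}$ plus nonnegative terms. It then shows that the sum of their positive parts is at least $\max_i(\beta_{i+1}-\beta_i)\ge\frac{\beta_t-\beta_{z_d}}{t-z_d}$.
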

\begin{proof} See Appendix \ref{appendix:d}.\end{proof}
 Under condition \eqref{marker10}, Theorem~\ref{thm4b} means that it is impossible for the user to store a file of length $ r_{*}^{(F)}$ if the public communication length to Server $l \in \mathcal{L}$ is smaller than $ \displaystyle\max_{d \in \llbracket 0, D\rrbracket}\left(\frac{1-\alpha_{d}}{t-z_{d}}\right) \times r^{(F)}_* (\boldsymbol{\alpha}, t, \boldsymbol{z})$~bits.
 \subsection{Capacity Result}
 In the following, we propose an achievability scheme that simultaneously matches all the converse results of Section~\ref{ssecconv}.
\begin{thm}
   There exists a $(2^{r^{(F)}},2^{r^{(R)}},(2^{r^{(M)}_{l}})_{l \in \mathcal{L}}, (2^{r^{(S)}_{l}})_{l \in \mathcal{L}}) $ private file storage strategy that $(\boldsymbol{\alpha}, t, \boldsymbol{z})$-achieves $r^{(F)}$such that 
    \begin{align*}
         r^{(F)} &= \min_{d \in \llbracket 0, D\rrbracket}\left(\frac{t-z_{d}}{1-\alpha_{d}}\right)n,\\
r^{(S)}_{l}&= n, \forall l \in  \mathcal{L},\\
r^{(M)}_{l}&= n, \forall l \in  \mathcal{L}.
    \end{align*}
    \label{thm6b}
\end{thm}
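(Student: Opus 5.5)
The plan is to reuse the separation architecture of the proof of Theorem~\ref{thm6}. First I would build a single secret-sharing scheme whose leakage profile satisfies all $D$ constraints at once. I would then deliver the shares with a one-time pad $M_l = H_l \oplus K_l$. The recoverability, security and privacy-leakage arguments of Section~\ref{secachiev6} then carry over verbatim, once for each level $d$. This is because those arguments only used the property $I(F;H_{\mathcal U})\le \alpha H(F)$ together with the independence and uniformity of the keys. So the whole task reduces to constructing shares $(H_l)_{l\in\mathcal L}$ of length $n$ that satisfy two properties:
\begin{itemize}
\item recoverability from any $t$ shares;
\item $I(F;H_{\mathcal U})\le \alpha_d H(F)$ for every $|\mathcal U|\le z_d$ and every $d$.
\end{itemize}

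For the construction I would use a linear ramp (MDS-based, Shamir-type) scheme. Write $k \triangleq \min_{d\in\llbracket 0,D\rrbracket}\frac{t-z_d}{1-\alpha_d}$, so that $r^{(F)}=kn$. The $d=0$ term gives $k\le t$. As stated in Section~\ref{sec3}, I would assume integrality after splitting the $n$-bit blocks into subsymbols over a field $\mathbb F_{2^m}$; concretely, after subdividing, treat $n$ bits as one unit and assume $k$ and $t$ are integers in these units. Take $F$ as $k$ units and local randomness $R$ as $\rho\triangleq t-k$ uniform units. Encode the vector $(F,R)$ of $t$ units with a generator matrix of an $[L,t]$ MDS code, for instance a Vandermonde/Reed--Solomon matrix with $L$ distinct evaluation points. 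Choose it so that the restriction of the code to the $R$-coordinates is itself MDS of dimension $\rho$; Shamir's ramp construction achieves this with secrets placed at extra evaluation points. Two properties follow:
\begin{itemize}
\item Any $t$ shares determine $(F,R)$, which gives reliability.
\item For any $\mathcal U$, the map $R\mapsto H_{\mathcal U}$ given $F$ has rank $\min(|\mathcal U|,\rho)$, so $H(H_{\mathcal U}|F)=\min(|\mathcal U|,\rho)$ units.
\end{itemize}

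The key leakage computation is, in units of $n$ bits,
\begin{align*}
I(F;H_{\mathcal U}) = H(H_{\mathcal U})-H(H_{\mathcal U}|F)\le |\mathcal U|-\min(|\mathcal U|,\rho)=[\,|\mathcal U|-(t-k)\,]^+ .
\end{align*}
This bound is nondecreasing in $|\mathcal U|$, so it suffices to check $|\mathcal U|=z_d$. The requirement becomes $[z_d-t+k]^+\le \alpha_d k$. It is trivial when the bracket vanishes, and otherwise it is equivalent to $k(1-\alpha_d)\le t-z_d$, which holds for every $d$ by the choice of $k$ as the minimum. If some $\alpha_d=1$, that constraint is vacuous and the corresponding term is read as $+\infty$. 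Each share is one unit, i.e.\ $n$ bits, so $r^{(M)}_l=r^{(S)}_l=n$. The local randomness used is $(t-k)n$ bits, although the theorem does not ask us to optimize it.

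The main obstacle is the single-scheme requirement: one code must meet all $D$ leakage levels simultaneously. The monotone profile $[|\mathcal U|-\rho]^+$ of the ramp scheme resolves this, since every level then reduces to the same scalar inequality on $k$. The remaining care goes into two points. The first is ensuring that the Vandermonde construction really gives full rank $\min(|\mathcal U|,\rho)$ on the randomness part for every subset; this holds for Reed--Solomon codes over a large enough field of characteristic two. The second is handling non-integer $k$, which I would do by subdividing symbols.
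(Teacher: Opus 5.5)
Your proof is correct, and its outer architecture matches the paper's exactly: shares are generated independently of the keys, one-time padded with $K_l$, and the recoverability, security and leakage computations of Section~\ref{secachiev6} are repeated level by level. Where you differ is the secret-sharing ingredient. The paper invokes Theorem~\ref{thm7b} from \cite{Morteza} as a black box to get shares of length $\max_{d}\frac{1-\alpha_d}{t-z_d}\,r^{(F)}=n$. You instead build one Shamir/Reed--Solomon ramp scheme with $\rho=t-k$ random units and check every level directly from its monotone leakage profile $[|\mathcal U|-\rho]^+$; each level collapses to the inequality $k(1-\alpha_d)\le t-z_d$ that defines $k$.

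The paper's route is shorter. Yours is self-contained, explains why a scheme tuned to the binding level meets all the other levels for free, and fixes the randomness at $(t-k)n$ bits. That amount is in fact optimal. The bound $r^{(R)}\ge\frac{z-\alpha t}{t-z}\,r^{(F)}$ derived in Appendix~\ref{appendix:e} holds for any strategy meeting a single level's constraints. Apply it at a level $d^\star\ge 1$ attaining the minimum in $k$, with $r^{(F)}=kn$; this gives $r^{(R)}\ge\frac{z_{d^\star}-\alpha_{d^\star}t}{1-\alpha_{d^\star}}\,n=(t-k)n$. If only $d=0$ attains the minimum, then $k=t$ and no randomness is used. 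So, under the same integrality assumptions, your construction together with Appendix~\ref{appendix:e} answers the local-randomness question that Section~\ref{sec4} leaves open.

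The one step to make precise is non-integer $k=p/q$. With one field symbol per share, your rank argument needs $k$ to be an integer. ``Subdividing'' should therefore mean the following: each server receives $q$ evaluation points of a polynomial of degree at most $tq-1$ over symbols of $n/q$ bits, with $p$ secret symbols and $tq-p$ random symbols. The same Vandermonde argument then gives leakage $[q|\mathcal U|-(tq-p)]^+$ symbols, which is again $[|\mathcal U|-\rho]^+$ in $n$-bit units. Any $t$ servers hold $tq$ evaluations and can decode.
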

\subsection{Proof of Theorem~\ref{thm6b}} 
In Section \ref{secprelim}, we first present a generalization of secret sharing with privacy leakage, extending the model presented in Section \ref{secprelim1}. In Section \ref{seccs2}, we then describe our achievability scheme and provide its analysis. 

\subsubsection{Preliminaries} \label{secprelim} Let $  t \in \llbracket 1,L \rrbracket$, $\boldsymbol{z}\triangleq \left(z_{d}\right)_{d \in \llbracket 1, D\rrbracket}   \in  \llbracket 1, t-1 \rrbracket^{D}  $, $ \boldsymbol{\alpha}\triangleq \left(\alpha_{d}\right)_{d \in \llbracket 1, D\rrbracket} \in ~[0,1]^D\cap~\mathbb{Q}^D $. Definition~\ref{defssh} can be extended by replacing \eqref{eqpriv} with the constraint 
\begin{align}
   \displaystyle \max_{\mathcal{U} \subseteq \mathcal{L}:|\mathcal{U}|\leq z_{d} } I(S;H_{\mathcal{U}})\leq \alpha_{d} H(S). \label{marker13b}
\end{align}
This new definition defines an $(\boldsymbol{\alpha}, t, \boldsymbol{z})$- secret sharing scheme. The following theorem characterizes the optimal share size for such an $(\boldsymbol{\alpha}, t, \boldsymbol{z})$- secret sharing scheme.

\begin{thm}
[\cite{Morteza}] For a fixed secret length $n_{s}$, there exists an $(\boldsymbol{\alpha}, t, \boldsymbol{z})$-secret sharing scheme such that the length of a share $n_{sh}$  satisfies
 \begin{align*}
n_{sh}&=\displaystyle\max_{d \in \llbracket 0, D\rrbracket}\left(\frac{1-\alpha_{d}}{t-z_{d}}\right) n_{s}. \end{align*}
\label{thm7b}
\end{thm}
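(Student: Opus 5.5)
We plan to prove Theorem~\ref{thm7b} with a single linear ramp secret-sharing scheme, chosen so that the privacy leakage of every coalition is exactly $[\,|\mathcal{U}|\,n_{sh}-n_r]^+$ bits. A scheme with this exact leakage satisfies all $D$ constraints \eqref{marker13b} at once. Set
\begin{align*}
n_{sh}\triangleq\max_{d \in \llbracket 0, D\rrbracket}\left(\frac{1-\alpha_{d}}{t-z_{d}}\right) n_{s},\qquad n_r\triangleq t\,n_{sh}-n_s .
\end{align*}
The index $d=0$ gives $n_{sh}\geq n_s/t$, so $n_r\geq 0$. As in the rest of the paper, we assume the lengths are integers, and we choose a common symbol size $m$ so that $n_s=sm$, $n_{sh}=km$ and $n_r=rm$ with $s+r=tk$.

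\emph{Construction.} Work over $\mathrm{GF}(2^m)$ with $2^m\geq kL$, and fix $kL$ distinct nonzero evaluation points, $k$ of them assigned to each server. Form the polynomial
\begin{align*}
p(x)=\sum_{i=0}^{r-1}R_i x^i+\sum_{j=0}^{s-1}S_j x^{r+j},
\end{align*}
which has degree at most $tk-1$. Its low-order coefficients are the random symbols and its high-order coefficients are the secret symbols. Share $H_l$ consists of the $k$ evaluations of $p$ at the points assigned to Server~$l$, which is $km=n_{sh}$ bits.

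\emph{Recoverability.} Any $t$ servers together hold $tk$ distinct evaluations of a polynomial of degree at most $tk-1$. The corresponding Vandermonde matrix is invertible, so these servers recover every coefficient, and in particular $S$.

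\emph{Leakage.} Take $\mathcal{U}$ with $|\mathcal{U}|=u$. Write $I(S;H_{\mathcal{U}})=H(H_{\mathcal{U}})-H(H_{\mathcal{U}}|S)$ and bound $H(H_{\mathcal{U}})\leq u\,n_{sh}$. Given $S$, the vector $H_{\mathcal{U}}$ is an affine function of $R$, and the linear part is the $uk\times r$ matrix with rows $(1,x,\dots,x^{r-1})$ over the $uk$ distinct points. Any $\min(uk,r)$ of these rows form a square Vandermonde matrix on distinct points, so the matrix has rank $\min(uk,r)$. Because $R$ is uniform, $H(H_{\mathcal{U}}|S)=\min(uk,r)\,m$, and therefore
\begin{align*}
I(S;H_{\mathcal{U}})\leq [\,u\,n_{sh}-n_r]^+=[\,n_s-(t-u)n_{sh}]^+ .
\end{align*}
For $u\leq z_d$, the choice of $n_{sh}$ gives $(t-z_d)n_{sh}\geq(1-\alpha_d)n_s$. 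Hence $n_s-(t-u)n_{sh}\leq n_s-(t-z_d)n_{sh}\leq \alpha_d n_s$, and the bound is also trivially at most $\alpha_d n_s$ when it is $0$. So the constraint \eqref{marker13b} holds for every $d$ simultaneously.

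The step that needs the most care is the rank claim for every coalition. The argument depends on placing the randomness in the low-degree coefficients, so that every subset of evaluation points yields a full-rank Vandermonde block. It also requires the field to be large enough, $2^m\geq kL$, to provide $kL$ distinct points. If the integrality bookkeeping fails for the given rational $\alpha_d$, we would pass to a multiple of all lengths, which the standing integrality assumption allows.
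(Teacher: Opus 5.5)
Your proof is correct. It differs from the paper mainly in that the paper gives no argument of its own for Theorem~\ref{thm7b}: it imports the result from \cite{Morteza}, and as stated that result certifies only the share length $n_{sh}$.

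You instead build one explicit linear ramp scheme: a Shamir-type polynomial with the randomness in the low-order coefficients and $k$ evaluation points per server. Its leakage depends only on the coalition size, through $[\,u\,n_{sh}-n_r]^+$, and is nondecreasing in $u$. Taking $n_{sh}$ to be the maximum over $d$ then enforces all $D$ constraints \eqref{marker13b} at once by a one-line check, with no level-by-level design.

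The citation is shorter and hands the existence and field-size questions to \cite{Morteza}. Your construction is self-contained and also exhibits the randomness length $n_r=t\,n_{sh}-n_s=\max_{d\in\llbracket 1,D\rrbracket}\frac{[z_d-\alpha_d t]^+}{t-z_d}\,n_s$, which reduces to the value in Theorem~\ref{thm7} when $D=1$. Plugged into the scheme of Theorem~\ref{thm6b} with $n_s=r^{(F)}$ and $n_{sh}=n$, it gives $r^{(R)}=tn-r^{(F)}=\max_{d}\frac{[z_d-\alpha_d t]^+}{t-z_d}\,r^{(F)}$. The bound $r^{(R)}\geq\frac{z_d-\alpha_d t}{t-z_d}r^{(F)}$ derived in Appendix~\ref{appendix:e} holds for an arbitrary strategy and uses only one privacy level at a time. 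So your scheme would also settle the multilevel local-randomness question that the concluding remarks leave open.

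Your closing integrality caveat is genuinely necessary, not just bookkeeping. For $t=2$, $z_1=1$, $\alpha_1=0$, $L=3$ and $n_s=n_{sh}=1$, no scheme exists at all. The statement can therefore only be read for admissible $n_s$, which is exactly the paper's standing assumption.

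Two small repairs are needed. First, with $kL$ distinct \emph{nonzero} points you need $2^m>kL$, not $2^m\geq kL$. Since the randomness sits in the low-order coefficients, the point $0$ does no harm, so you can simply drop ``nonzero.'' Second, when $uk<r$, $uk$ rows of length $r$ do not form a square matrix. Justify the rank claim with the leading $uk\times uk$ block (the first $uk$ columns), which is a Vandermonde matrix on distinct points.
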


\subsubsection{Coding scheme and analysis} \label{seccs2} Consider a file $F$ such that
$r^{(F )} ~= ~|F|~ =~\displaystyle \min_{d \in \llbracket 0, D\rrbracket}\left(\frac{t-z_{d}}{1-\alpha_{d}}\right)n$. Then, the  user forms $(H_{l})_{l\in \mathcal{L}}$
with an $(\boldsymbol{\alpha}, t, \boldsymbol{z})$ secret sharing scheme from
Theorem~\ref{thm7b} such that, for $l \in \mathcal{L}$, the length of a share is $|H_{l}| = ~\displaystyle\max_{d \in \llbracket 0, D\rrbracket}\left(\frac{1-\alpha_{d}}{t-z_{d}}\right)\times \displaystyle \min_{d \in \llbracket 0, D\rrbracket}\left(\frac{t-z_{d}}{1-\alpha_{d}}\right)n = n$. Hence,
since $|K_{l}| = n, l \in \mathcal{L}$, the  user can form $M_{l}\triangleq H_{l} \oplus K_{l}$ and publicly send it to Server $l$, where $\oplus$ denotes bitwise
modulo-two addition. Upon receiving $M_{l}$, Server $l$ stores
$S_{l} \triangleq K_{l} \oplus M_{l} = H_{l}$.

\textit{Resources needed:} For  $l \in \mathcal{L}$, the length of the public communication from the user to Server~$l$ is $ r^{(M)}_{l} = |M_{l}|= n$, and the storage needed at Server~$l$ is $r^{(S)}_{l}=
| M_{l}|= n.$

 \textit{Analysis of recoverability:} Consider an arbitrary subset $\mathcal{A} \subseteq \mathcal{L}$ of $t$ servers that pool their information, they
then have access to $(S_{l})_{l \in \mathcal{A}} = (H_{l})_{l \in \mathcal{A}}$ such that by
Theorem~\ref{thm7b}, $H(F|(H_{l})_{l \in \mathcal{A}}) = 0$ since $|\mathcal{A}|= t.$

\textit{Analysis of privacy leakage:}  Let $d \in \llbracket 1, D \rrbracket$ and consider an arbitrary subset
$\mathcal{U} \subseteq \mathcal{L}$ of $z_{d}$ colluding servers. Then, similar to Section~\ref{secachiev6}, we have
\begin{align*}
I(F;M,K_{\mathcal{U}}) \leq \alpha_{d} H(F).
\end{align*}

The analysis of security is identical to the one in Section~\ref{secachiev6}.
\section{Concluding remarks}\label{sec4}
We considered storing a file across $L$ servers, ensuring that any set of at least $t$ servers can reconstruct the file and any set of $z$ colluding servers cannot learn more than a fraction $\alpha $  of the file. Hence, our model introduces a trade-off between the privacy leakage parameter $\alpha $  and the file length that the user can store on the servers. In contrast to traditional secret sharing, our model did not assume the existence of information-theoretically secure channels. Instead, the user communicates with the servers over a public channel and shares a secret key of length $n$ with each server. For a given $n$, we characterized the optimal trade-off between the file length that the user can store and the privacy leakage parameter $\alpha $. Additionally, when this optimal trade-off is achieved, we established the minimum amount of local randomness needed by the user,  the minimum amount of publication communication, and the minimum storage requirement at the servers. We also extended our setting with a multilevel privacy constraint, where for each level $d \in \llbracket 1, D\rrbracket$, any set of $z_d$ colluding servers cannot learn more than a fraction $\alpha_d$ of the~file. For this multilevel privacy constraint setting, an open problem is to characterize the optimal amount of local randomness needed at the user.

\appendices
 \section{Proof of Theorem \ref{thm1b}}\label{appendix:a}
  Consider an arbitrary $(2^{r^{(F)}},2^{r^{(R)}},(2^{r^{(M)}_{l}})_{l \in \mathcal{L}}, (2^{r^{(S)}_{l}})_{l \in \mathcal{L}}) $ private file storage  strategy that $(\boldsymbol{\alpha}, t, \boldsymbol{z})$-achieves $r^{(F)}.\\$

Let $ d\in \llbracket0,D\rrbracket$, $\mathcal{A},\mathcal{U}\subseteq \mathcal{L}$ such that $|\mathcal{A}|=t$, $|\mathcal{U}|=z_{d}$, and $\mathcal{U}\subset \mathcal{A}$.
      We have
\begin{align}
   r^{(F)} \nonumber \nonumber&\stackrel{(a)}{=} H(F)\\
  \nonumber &\stackrel{}{=}H(F|M,K_{\mathcal{U}})+I(F;M,K_{\mathcal{U}}) \\
 \nonumber & \stackrel{(b)}{\leq} H(F|M,K_{\mathcal{U}}) +  \alpha_{d}  H(F)  \\
   \nonumber&\stackrel{}{=} I(\hat F(\mathcal{A}); F|M,K_{\mathcal{U}}) + H(F|M,K_{\mathcal{U}}, \hat F(\mathcal{A}))  + \alpha_{d}  H(F)  \\
   \nonumber&\stackrel{(c)}{\leq}  I(\hat F(\mathcal{A}); F|M,K_{\mathcal{U}}) + H(F|\hat F(\mathcal{A})) + \alpha_{d}  H(F)\\
 \nonumber & \stackrel{(d)}{=}I(\hat F(\mathcal{A}); F|M,K_{\mathcal{U}}) + \alpha_{d}  H(F)\\
   \nonumber&\stackrel{(e)}{\leq}I(M,K_{\mathcal{A}}; F|M,K_{\mathcal{U}}) + \alpha_{d}  H(F)\\
   \nonumber&\stackrel{(f)}{=}I(K_{\mathcal{A}}; F|M,K_{\mathcal{U}}) + \alpha_{d}  H(F)\\
  \nonumber& \stackrel{(g)}{\leq} I(K_{\mathcal{A}};K_{\mathcal{L}}, F,R|K_{\mathcal{U}})+ \alpha_{d}  H(F)\\
   \nonumber&=I(K_{\mathcal{A}};K_{\mathcal{L}}, F|K_{\mathcal{U}}) +I(K_{\mathcal{A}};R |K_{\mathcal{L}},F) + \alpha_{d}  H(F)\\
 \nonumber & \stackrel{}{=}I(K_{\mathcal{A}};K_{\mathcal{L}}, F|K_{\mathcal{U}})  + \alpha_{d}  H(F)\\
 \nonumber & {=}I(K_{\mathcal{A}};K_{\mathcal{L}}|K_{\mathcal{U}})  + \alpha_{d}  H(F)\\
  \nonumber &\stackrel{}{=}H(K_{\mathcal{A}}| K_{\mathcal{U}})+ \alpha_{d}  H(F)\\
 \nonumber & \stackrel{(h)}{=}H(K_{\mathcal{A\backslash U}})+ \alpha_{d}  H(F)\\
   \nonumber&\stackrel{(i)}{=}n(t-z_{d})+\alpha_{d}  H(F)\\
   &=n(t-z_{d})+\alpha_{d}  r^{(F)},\label{marker14}
\end{align}
    where \begin{enumerate}[(a)]
        \item  holds by uniformity of the file $F$; \item holds by~\eqref{marker2b} for $d\in \llbracket1,D\rrbracket$, otherwise holds by~\eqref{marker3b} for $d=0$ because in this case $\alpha_{0} \triangleq 0$ and $z_{0} \triangleq 0$; \item  holds because conditioning reduces entropy; \item holds by~\eqref{marker1b}; \item holds by the data processing inequality; \item holds by the chain rule; \item holds by the chain rule and the data processing inequality because $M$ is a function of $(F, R,K_{\mathcal{L}})$; \item holds because $\mathcal{U} \subset \mathcal{A}$; \item holds because the keys are distributed uniformly and $|\mathcal{A}\backslash \mathcal{U}|=t-z_{d}$.\end{enumerate} Then, from \eqref{marker14} we have
\begin{align} 
   &\forall d \in \llbracket 0, D\rrbracket,  r^{(F)} \leq \frac{n(t-z_{d})}{1-\alpha_{d}}. \label{marker14c}
\end{align}

Finally, note that \eqref{marker14c} is valid for any private file storage strategy and, in particular, the ones that $(\boldsymbol{\alpha}, t, \boldsymbol{z})$-achieve $r^F_{*}(\boldsymbol{\alpha}, t, \boldsymbol{z}).$
\section{Proof of Theorem \ref{thm2b}}\label{appendix:b}Consider an arbitrary $(2^{r^{(F)}},2^{r^{(R)}},(2^{r^{(M)}_{l}})_{l \in \mathcal{L}}, (2^{r^{(S)}_{l}})_{l \in \mathcal{L}}) $ private file storage  strategy that $(\boldsymbol{\alpha}, t, \boldsymbol{z})$-achieves $r^{(F)}.$
    
  Server $l \in \mathcal{L}$ must store the key ${K}_{l}$ at the beginning of the protocol. Hence, we must have 
\begin{align*}
   & r^{(S)}_{l,*}(\boldsymbol{\alpha}, t, \boldsymbol{z})\geq |{K}_{l}|= n. 
\end{align*}
   \section{Proof of Theorem \ref{thm3b}}\label{appendix:c}
  Consider an arbitrary $(2^{r^{(F)}},2^{r^{(R)}},(2^{r^{(M)}_{l}})_{l \in \mathcal{L}}, (2^{r^{(S)}_{l}})_{l \in \mathcal{L}})$ private file storage strategy that $(\boldsymbol{\alpha}, t, \boldsymbol{z})$-achieves $r^{(F)}.$

  We will use the following lemma, which provides a lower bound on the amount of communication from the user to subsets of servers.
\begin{lem} For $ d\in \llbracket0,D\rrbracket$, $\mathcal{T} \subseteq \mathcal{L}$ and $\mathcal{S} \subseteq \mathcal{L} \backslash \mathcal{T}$ such that $|\mathcal{T}|=z_{d}$ and $|\mathcal{S}|=t-z_{d}$, we have \label{lem2}
\begin{align}
 \displaystyle \sum_{l \in S} H(M_{l}) \geq (1-\alpha_d)H(F). \label{marker16} 
\end{align}\end{lem}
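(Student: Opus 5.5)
We prove \eqref{marker16} by mirroring the chain of Appendix~\ref{appendix:a}, with the keys of the servers in $\mathcal{S}$ replaced by their public messages. Set $\mathcal{A}\triangleq\mathcal{T}\cup\mathcal{S}$, so $|\mathcal{A}|=t$ and $\mathcal{T}\subset\mathcal{A}$. Since $S_l=g_l(M_l,K_l)$ and $\hat F(\mathcal{A})=f_{\mathcal{A}}(S_{\mathcal{A}})$, reliability \eqref{marker1b} gives $H(F\mid M_{\mathcal{A}},K_{\mathcal{A}})=0$.

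The leakage constraint at level $d$ is $I(F;M,K_{\mathcal{T}})\le\alpha_d H(F)$: for $d\ge 1$ this is \eqref{marker2b}, and for $d=0$ it is the security constraint \eqref{marker3b} with $\mathcal{T}=\emptyset$. Dropping some of the public messages can only reduce the mutual information, so
\begin{align*}
I(F;M_{\mathcal{T}},K_{\mathcal{T}})\le \alpha_d H(F).
\end{align*}

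Next we bound the remaining uncertainty. By the chain rule,
\begin{align*}
(1-\alpha_d)H(F)\le H(F\mid M_{\mathcal{T}},K_{\mathcal{T}})
= I(F;M_{\mathcal{S}},K_{\mathcal{S}}\mid M_{\mathcal{T}},K_{\mathcal{T}})+H(F\mid M_{\mathcal{A}},K_{\mathcal{A}}),
\end{align*}
and the last term is zero. We expand the conditional mutual information as
\begin{align*}
I(F;K_{\mathcal{S}}\mid M_{\mathcal{T}},K_{\mathcal{T}})+I(F;M_{\mathcal{S}}\mid M_{\mathcal{T}},K_{\mathcal{T}},K_{\mathcal{S}}).
\end{align*}
The second term is at most $H(M_{\mathcal{S}})\le\sum_{l\in\mathcal{S}}H(M_l)$.

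The main obstacle is showing that the first term vanishes, i.e. that the keys $K_{\mathcal{S}}$ alone carry no information about $F$ given the coalition's view. We would argue as follows. $(M_{\mathcal{T}},K_{\mathcal{T}})$ is a function of $(F,R,K_{\mathcal{T}})$. The triple $(F,R,K_{\mathcal{T}})$ is independent of $K_{\mathcal{S}}$ because the keys are jointly independent and independent of $(F,R)$. Hence $K_{\mathcal{S}}$ is independent of $(F,M_{\mathcal{T}},K_{\mathcal{T}})$, which gives $I(F;K_{\mathcal{S}}\mid M_{\mathcal{T}},K_{\mathcal{T}})\le I(K_{\mathcal{S}};F,M_{\mathcal{T}},K_{\mathcal{T}})=0$.

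Combining the three steps yields $(1-\alpha_d)H(F)\le\sum_{l\in\mathcal{S}}H(M_l)$, which is \eqref{marker16}.
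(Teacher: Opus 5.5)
Your proof is correct and follows essentially the same argument as the paper. Both condition on the coalition's view $(M_{\mathcal{T}},K_{\mathcal{T}})$, use the leakage constraint (or the security constraint when $d=0$) together with reliability to get $(1-\alpha_d)H(F)\le I(F;M_{\mathcal{S}},K_{\mathcal{S}}\mid M_{\mathcal{T}},K_{\mathcal{T}})$, and use the independence of $K_{\mathcal{S}}$ from $(F,R,K_{\mathcal{T}})$ to remove the keys' contribution. The only difference is bookkeeping: the paper adds $\sum_{l\in\mathcal{S}}H(K_l)$ to both sides and cancels it via $H(M_{\mathcal{S}},K_{\mathcal{S}}\mid F,M_{\mathcal{T}},K_{\mathcal{T}})\ge H(K_{\mathcal{S}})$, whereas you split off $I(F;K_{\mathcal{S}}\mid M_{\mathcal{T}},K_{\mathcal{T}})=0$ directly, which is slightly cleaner.
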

\begin{proof}
     We have
\begin{align*}
\displaystyle\sum_{l \in \mathcal{S}} H(M_{l})+\sum_{l \in \mathcal{S}} H(K_{l})
 &\geq H(M_{\mathcal{S}}, K_{\mathcal{S}})\\
 &\stackrel{(a)}{\geq}H(M_{\mathcal{S}}, K_{\mathcal{S}}| M_{\mathcal{T}}, K_{\mathcal{T}})\\
 &\stackrel{}{=} I(M_{\mathcal{S}},K_{\mathcal{S}};F|M_{\mathcal{T}}, K_{\mathcal{T}})+H(M_{\mathcal{S}},K_{\mathcal{S}}|F,M_{\mathcal{T}}, K_{\mathcal{T}})\\
  &\stackrel{}{=}H(F|M_{\mathcal{T}}, K_{\mathcal{T}})-H(F|M_{\mathcal{S}},K_{\mathcal{S}},M_{\mathcal{T}}, K_{\mathcal{T}})+H(M_{\mathcal{S}},K_{\mathcal{S}}|F,M_{\mathcal{T}}, K_{\mathcal{T}})\\
 &\stackrel{(b)}{\geq}H(F|M_{\mathcal{T}}, K_{\mathcal{T}})-H(F|S_{\mathcal{S}\cup \mathcal{T}})+H(M_{\mathcal{S}},K_{\mathcal{S}}|F,M_{\mathcal{T}}, K_{\mathcal{T}})\\
  &\stackrel{(c)}{\geq}H(F|M_{\mathcal{T}}, K_{\mathcal{T}})-H(F|\hat F({\mathcal{S}}\cup \mathcal{T}))+H(M_{\mathcal{S}},K_{\mathcal{S}}|F,M_{\mathcal{T}}, K_{\mathcal{T}})\\
  &\stackrel{}{=} H(F)-I(F;M_{\mathcal{T}},K_{\mathcal{T}})-H(F|\hat F({\mathcal{S}}\cup \mathcal{T}))+H(M_{\mathcal{S}},K_{\mathcal{S}}|F,M_{\mathcal{T}}, K_{\mathcal{T}})\\
  &\stackrel{(d)}{\geq}H(F)-I(F;M,K_{\mathcal{T}})-H(F|\hat F({\mathcal{S}}\cup \mathcal{T}))+H(M_{\mathcal{S}},K_{\mathcal{S}}|F,M_{\mathcal{T}}, K_{\mathcal{T}})\\
  &\stackrel{(e)}{\geq}H(F)-\alpha_{d} \times H(F)+H(M_{\mathcal{S}},K_{\mathcal{S}}|F,M_{\mathcal{T}}, K_{\mathcal{T}})\\
  &\stackrel{}{=}(1-\alpha_{d})H(F)+H(M_{\mathcal{S}},K_{\mathcal{S}}|F,M_{\mathcal{T}}, K_{\mathcal{T}})\\
  &\stackrel{(f)}{\geq}(1-\alpha_{d})H(F)+H(M_{\mathcal{S}},K_{\mathcal{S}}|F,R, K_{\mathcal{T}})\\
  &\geq (1-\alpha_{d})H(F)+H(K_{\mathcal{S}}|F,R, K_{\mathcal{T}})\\
  &\stackrel{(g)}{=}(1-\alpha_{d})H(F)+\sum_{l\in \mathcal{S}}H(K_{l}),
\end{align*}
where 
    \begin{enumerate}[(a)]
        \item 
  holds because conditioning reduces entropy; \item  holds by the data processing inequality because $S_{\mathcal{S}\cup \mathcal{T}}$ is a function of $(M_{\mathcal{S}},K_{\mathcal{S}},M_{\mathcal{T}},K_{\mathcal{T}})$; \item  holds by Definition~\ref{def1}, and the data processing inequality because $\hat F(\mathcal{{\mathcal{S}\cup \mathcal{T}}})$ is a function of $S_{\mathcal{{\mathcal{S}\cup \mathcal{T}}}}$; \item  holds because  $M$ contains $M_{\mathcal{T}}$; \item for $d\in \llbracket1,D\rrbracket$ holds by~\eqref{marker1b} and \eqref{marker2b}, otherwise holds by~\eqref{marker1b} and \eqref{marker3b}  for $d=0$, because $\alpha_0 \triangleq 0$ and $z_0 \triangleq 0$; \item  holds by the data processing inequality because $ M_{\mathcal{T}}$  is a function of  $( F,R,K_{\mathcal{T}})$; \item holds because the keys $K_{\mathcal{S}}$ are jointly independent and $K_{\mathcal{S}}$ is independent from $(F,R, K_{\mathcal{T}})$.\end{enumerate}\end{proof}
    Then, we have
\begin{align}
   \frac{L}{t-z_{d}}r^{(F)}
     \nonumber &\stackrel{(a)}{=}\frac{L}{t-z_{d}}H(F)\\
    \nonumber  &\stackrel{(b)}{=}\gamma\displaystyle \sum_{ \substack{ {\mathcal{T} \subseteq \mathcal{L}}\\ {|\mathcal{T}|=z_{d}}}} \displaystyle \sum_{\substack{{\mathcal{S} \subseteq \mathcal{T}^c}\\{|\mathcal{S}|=t-z_{d}}}}H(F)\\
     \nonumber &\stackrel{(c)}{\leq} \gamma\frac{1}{1-\alpha_{d}} \displaystyle \sum_{ \substack{ {\mathcal{T} \subseteq \mathcal{L}}\\ {|\mathcal{T}|=z_{d}}}}\displaystyle \sum_{\substack{{\mathcal{S} \subseteq \mathcal{T}^c}\\{|\mathcal{S}|=t-z_{d}}}} \sum_{{l \in \mathcal{S}}}H(M_{l})\\
     \nonumber &\stackrel{(d)}{=}\gamma\frac{1}{1-\alpha_{d}}\displaystyle \sum_{ \substack{ {\mathcal{T} \subseteq \mathcal{L}}\\ {|\mathcal{T}|=z_{d}}}} \binom{L-z_{d}-1}{t-z_{d}-1}\sum_{{l \in \mathcal{T}^c}}H(M_{l})\\
     \nonumber &\stackrel{(e)}{=}\gamma\frac{1}{1-\alpha_{d}}\binom{L-z_{d}-1}{t-z_{d}-1}\displaystyle \sum_{ \substack{ {\mathcal{T} \subseteq \mathcal{L}}\\ {|\mathcal{T}|=L-z_{d}}}}\sum_{{l \in \mathcal{T}}}H(M_{l})\\
     \nonumber &\stackrel{(f)}{=}\gamma\frac{1}{1-\alpha_{d}}\binom{L-z_{d}-1}{t-z_{d}-1}\binom{L-1}{L-z_{d}-1} \sum_{{l \in \mathcal{L}}}H(M_{l})\\
   \nonumber &\stackrel{(g)}{=}\frac{1}{1-\alpha_{d}} \sum_{{l \in \mathcal{L}}}H(M_{l})\\
    &\stackrel{(h)}{\leq}\frac{1}{1-\alpha_{d}} \sum_{{l \in \mathcal{L}}}r_{l}^{(M)}, \label{marker17}
\end{align}
where
     \begin{enumerate}[(a)]
        \item  holds by uniformity of $F$; \item holds with $\gamma\triangleq~\frac{L}{t-z_{d}}\binom L{z_{d}}^{-1}\binom {L-z_{d}}{t-z_{d}}^{-1}$; \item holds by \eqref{marker16}; \item and (f) hold by \cite[Lemma 3.2]{b21}; \item holds by a change of variable in the sum; \addtocounter{enumi}{1} \item  holds because $\gamma\binom{L-z_{d}-1}{t-z_{d}-1}\binom{L-1}{L-z_{d}-1}=1$; \item  holds by the definition of $\mathcal{M}_{l}$.\end{enumerate} 

Finally, since \eqref{marker17} is valid for any private file storage strategy, it is also valid for a $(2^{r^{(F)}},2^{r^{(R)}},(2^{r^{(M)}_{l}})_{l \in \mathcal{L}}, (2^{r^{(S)}_{l}})_{l \in \mathcal{L}}) $ private file storage
strategy that $(\boldsymbol{\alpha}, t, \boldsymbol{z})$-achieves $r^{(F)}$, where $r^{(F)}=r^{(F)}_{*}(\boldsymbol{\alpha}, t, \boldsymbol{z})$,    and thus
\begin{align*}
    &\forall d \in \llbracket 0, D\rrbracket, L  \frac{1-\alpha_{d}}{t-z_{d}}r^{(F)}_{*}(\boldsymbol{\alpha}, t, \boldsymbol{z}) \leq r^{(M)}_{\sum,*} (\boldsymbol{\alpha}, t, \boldsymbol{z}) .
\end{align*}
\section{Proof of Theorem~\ref{thm4b}}\label{appendix:d}Consider an arbitrary $(2^{r^{(F)}},2^{r^{(R)}},(2^{r^{(M)}_{l}})_{l \in \mathcal{L}}, (2^{r^{(S)}_{l}})_{l \in \mathcal{L}}) $ private file storage  strategy that $(\boldsymbol{\alpha}, t, \boldsymbol{z})$-achieves $r^{(F)}$ and suppose that  \eqref{marker10} holds. 

For $i \in \mathcal{L}$, and $\mathcal{S} \subseteq \mathcal{L}$ such that $|\mathcal{S}|=i$, define $\beta_{i}= I (F;M_{\mathcal{S}},K_{\mathcal{S}})$ and $\beta_{L+1} \triangleq \beta_{L}$. Note that $\beta_{i}$ only depends on $i$ and not on $\mathcal{S}$ by \eqref{marker10}. We will need the following lemma. 
\begin{lem} \label{lem3}
Let $ d\in \llbracket0,D\rrbracket$ and $\mathcal{F}$ be the set of all the functions $f$:$\llbracket1,t-z_{d}+2\rrbracket \xrightarrow{} [0,H(F)]$ such that $f(1)=\beta_{z_{d}}$, $f(t-z_{d}+2)=f(t-z_{d}+1)=\beta_{t}$. Then, we have
   $$ H(M_{l})\geq  \displaystyle  \min_{f\in \mathcal{F}} \sum_{i=1}^{t-z_{d}}[2f(i+1)-f(i)-f(i+2)]^+ .$$\end{lem}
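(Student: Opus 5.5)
The plan is to exhibit one admissible function $f\in\mathcal{F}$, built from the $\beta_i$ of \eqref{marker10}, and to show that $H(M_l)$ dominates the corresponding sum. The minimum over $\mathcal{F}$ can only be smaller, so this suffices.

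\textbf{Choice of $f$.} Set $f(i)\triangleq\beta_{z_d+i-1}$ for $i\in\llbracket 1,t-z_d+1\rrbracket$ and $f(t-z_d+2)\triangleq\beta_t$, with the convention $\beta_0\triangleq 0$ (needed when $d=0$). Then $f(1)=\beta_{z_d}$ and $f(t-z_d+1)=f(t-z_d+2)=\beta_t$, and all values lie in $[0,H(F)]$. Hence $f\in\mathcal{F}$.

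Moreover, by \eqref{marker1b} and the data processing inequality (as in step (b) of Lemma~\ref{lem2}), $\beta_j=H(F)$ for every $j\ge t$. So with $\Delta_j\triangleq\beta_{j+1}-\beta_j$ we have $\Delta_t=0$. The summand $2f(i+1)-f(i)-f(i+2)$ equals $\Delta_{j}-\Delta_{j+1}$ with $j=z_d+i-1$, for $i\le t-z_d-1$. The last summand ($i=t-z_d$) equals $\beta_t-\beta_{t-1}=\Delta_{t-1}$. It therefore suffices to show
\begin{align*}
H(M_l)\ \ge\ \sum_{j=z_d}^{t-2}[\Delta_j-\Delta_{j+1}]^+ \;+\;\Delta_{t-1}.
\end{align*}

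\textbf{Chain construction.} Write $A\triangleq(M_l,K_l)$. Fix distinct servers $l_1,\dots,l_{t-1}\in\mathcal{L}\setminus\{l\}$, which is possible since $t\le L$. Let $Y_j\triangleq(M_{l_k},K_{l_k})_{k\le j}$ and $B_{j+1}\triangleq(M_{l_{j+1}},K_{l_{j+1}})$.

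By \eqref{marker10}, for $j\le t-2$,
\begin{align*}
\Delta_j=I(F;A|Y_j), \qquad \Delta_{j+1}=I(F;A|Y_j,B_{j+1}).
\end{align*}
The interaction-information identity gives
\begin{align*}
\Delta_j-\Delta_{j+1}=I(A;B_{j+1}|Y_j)-I(A;B_{j+1}|Y_j,F)\le I(A;B_{j+1}|Y_j).
\end{align*}
Since conditional mutual information is nonnegative, this also bounds the positive part. For the last term, $\Delta_{t-1}=I(F;A|Y_{t-1})$.

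\textbf{Summation and key step.} Summing over $j$ and applying the chain rule yields
\begin{align*}
\sum_{j=z_d}^{t-2}[\Delta_j-\Delta_{j+1}]^+ +\Delta_{t-1}\ \le\ I(A;Y_{t-1},F\,|\,Y_{z_d})\ \le\ I(A;X\,|\,Y_{z_d}),
\end{align*}
where $X\triangleq(F,R,K_{\mathcal{L}\setminus\{l\}},M_{\mathcal{L}\setminus\{l\}})$. The step I expect to be the main obstacle is bounding this by $H(M_l)$ rather than by $H(M_l,K_l)$; here the key-independence argument is used. Split
\begin{align*}
I(A;X|Y_{z_d})=I(K_l;X|Y_{z_d})+I(M_l;X|Y_{z_d},K_l).
\end{align*}
The first term vanishes because $K_l$ is independent of $(F,R,K_{\mathcal{L}\setminus\{l\}})$, and $(X,Y_{z_d})$ is a function of these. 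The second term is at most $H(M_l)$.

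Finally, \eqref{marker10} guarantees that every $\Delta_j$ above is independent of the chosen chain, so the bound holds with the specific $f$. Taking the minimum over $\mathcal{F}$ then concludes the proof.
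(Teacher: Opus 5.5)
Your proof is correct and is essentially the paper's argument. The paper also takes $f(i)=\beta_{z_d+i-1}$ (implicitly), telescopes $H(M_l,K_l\mid\cdot)$ along a nested chain of servers not containing $l$, and splits each increment into $2\beta_{i+1}-\beta_i-\beta_{i+2}$ plus the nonnegative term $I(M_l,K_l;B_{i+1}\mid Y_i,F)$, which is your interaction-information identity. It then uses the independence of $K_l$ from $(F,R,K_{\mathcal{L}\setminus\{l\}})$ to cancel $H(K_l)$, so your mutual-information bookkeeping only repackages that entropy telescoping, with explicit conventions $\beta_0=0$ and $f(t-z_d+2)=\beta_t$ replacing the paper's appeal to $\beta_{t+1}=\beta_t$ and $\beta_{L+1}\triangleq\beta_L$.
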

   \begin{proof}
  Fix $l \in \mathcal{L}$. For $i \in [z_{d}, t-1]$, define \begin{align*}
\mathcal{V}_{i} &\triangleq
\begin{cases}
\llbracket 1,i \rrbracket, & \text{if } l > i,\\[4pt]
\llbracket 1,i+1 \rrbracket \setminus \{l\}, & \text{if } l \le i,
\end{cases} \\
\mathcal{V}_{t} &\triangleq \mathcal{V}_{t-1} \cup \{l\}.
\end{align*}  
Then, we have
\begin{align}
    H(M_{l})+H(K_{l})
    \nonumber& \geq  H(M_{l},K_{l})\\
   \nonumber& \stackrel{(a)}{\geq}H(M_{l},K_{l}|M_{\mathcal{V}_{z_{d}}},K_{\mathcal{V}_{z_{d}}})\\
   \nonumber& \stackrel{(b)}{=}H(M_{l},K_{l}|M_{\mathcal{V}_{z_{d}}},K_{\mathcal{V}_{z_{d}}})-H(M_{l},K_{l}|M_{\mathcal{V}_{t}},K_{\mathcal{V}_{t}})\\
     & = \displaystyle \sum_{i=z_{d}}^{t-1}[H(M_{l},K_{l}|M_{\mathcal{V}_{i}},K_{\mathcal{V}_{i}})-H(M_{l},K_{l}|M_{\mathcal{V}_{i+1}},K_{\mathcal{V}_{i+1}})] \label{marker19}\\
     \nonumber& \stackrel{(c)}{=} \displaystyle \sum_{i=z_{d}}^{t-1}[H(M_{l},K_{l},F|M_{\mathcal{V}_{i}},K_{\mathcal{V}_{i}})-H(F|M_{\mathcal{V}_{i\cup \{l\}}},K_{\mathcal{V}_{i\cup \{l\}}})\\ \nonumber&-H(M_{l},K_{l},F|M_{\mathcal{V}_{i+1}},K_{\mathcal{V}_{i+1}})\\ \nonumber&+H(F|M_{\mathcal{V}_{i+1\cup \{l\}}},K_{\mathcal{V}_{i+1\cup \{l\}}})]\\
     \nonumber& \stackrel{(d)}{=} \displaystyle \sum_{i=z_{d}}^{t-1}[H(F|M_{\mathcal{V}_{i}},K_{\mathcal{V}_{i}})+H(M_{l},K_{l}|F,M_{\mathcal{V}_{i}},K_{\mathcal{V}_{i}})\\ \nonumber&-H(F|M_{\mathcal{V}_{i\cup \{l\}}},K_{\mathcal{V}_{i\cup \{l\}}}) -H(F|M_{\mathcal{V}_{i+1}},K_{\mathcal{V}_{i+1}})\\ \nonumber&-H(M_{l},K_{l}|F,M_{\mathcal{V}_{i+1}},K_{\mathcal{V}_{i+1}})\\ \nonumber&+H(F|M_{\mathcal{V}_{i+1\cup \{l\}}},K_{\mathcal{V}_{i+1\cup \{l\}}})]\\
     \nonumber& \stackrel{}{=}\displaystyle \sum_{i=z_{d}}^{t-1}[-I(F;M_{\mathcal{V}_{i}},K_{\mathcal{V}_{i}})+H(M_{l},K_{l}|F,M_{\mathcal{V}_{i}},K_{\mathcal{V}_{i}})\\ \nonumber&+I(F;M_{\mathcal{V}_{i\cup \{l\}}},K_{\mathcal{V}_{i\cup \{l\}}}) +I(F;M_{\mathcal{V}_{i+1}},K_{\mathcal{V}_{i+1}})\\ \nonumber&-H(M_{l},K_{l}|F,M_{\mathcal{V}_{i+1}},K_{\mathcal{V}_{i+1}})\\ \nonumber&-I(F;M_{\mathcal{V}_{i+1\cup \{l\}}},K_{\mathcal{V}_{i+1\cup \{l\}}})]\\
     \nonumber& \stackrel{(e)}{=} \displaystyle \sum_{i=z_{d}}^{t-1}[2\beta_{i+1}-\beta_{i}-\beta_{i+2}+H(M_{l},K_{l}|F,M_{\mathcal{V}_{i}},K_{\mathcal{V}_{i}})-H(M_{l},K_{l}|F,M_{\mathcal{V}_{i+1}},K_{\mathcal{V}_{i+1}})] \\
    \nonumber& \stackrel{(f)}{\geq} \displaystyle [2\beta_{t}-\beta_{t-1}-\beta_{t+1}+H(M_{l},K_{l}|F,M_{\mathcal{V}_{t-1}},K_{\mathcal{V}_{t-1}})]^+ + \sum_{i=z_{d}}^{t-2}[2\beta_{i+1}-\beta_{i}-\beta_{i+2}]^+  \\
    \nonumber& \stackrel{(g)}{\geq} \displaystyle [\beta_{t}-\beta_{t-1}+H(M_{l},K_{l}|F,M_{\mathcal{V}_{t-1}},K_{\mathcal{V}_{t-1}})]^++ \sum_{i=z_{d}}^{t-2}[2\beta_{i+1}-\beta_{i}-\beta_{i+2}]^+  \\
    \nonumber& \stackrel{(h)}{=} \displaystyle [\beta_{t}-\beta_{t-1}+H(M_{l},K_{l}|F,M_{\mathcal{V}_{t-1}},K_{\mathcal{V}_{t-1}})]+ \sum_{i=z_{d}}^{t-2}[2\beta_{i+1}-\beta_{i}-\beta_{i+2}]^+  \\
    \nonumber& \stackrel{(i)}{=} \displaystyle H(M_{l},K_{l}|F,M_{\mathcal{V}_{t-1}},K_{\mathcal{V}_{t-1}})\!+\!\! \sum_{i=z_{d}}^{t-1}[2\beta_{i+1}\!-\beta_{i}\!-\beta_{i+2}]^+  \\
   \nonumber & \geq \displaystyle H(K_{l}|F,M_{\mathcal{V}_{t-1}},K_{\mathcal{V}_{t-1}})+ \sum_{i=z_{d}}^{t-1}[2\beta_{i+1}-\beta_{i}-\beta_{i+2}]^+  \\
     \nonumber& \stackrel{(j)}{\geq} \displaystyle H(K_{l}|F,R,K_{\mathcal{V}_{t-1}})+ \sum_{i=z_{d}}^{t-1}[2\beta_{i+1}-\beta_{i}-\beta_{i+2}]^+  \\
     \nonumber & \stackrel{(k)}{=} \displaystyle H(K_{l})+ \sum_{i=z_{d}}^{t-1}[2\beta_{i+1}-\beta_{i}-\beta_{i+2}]^+  \\
      & \geq H(K_{l})+ \displaystyle  \min_{f\in \mathcal{F}} \sum_{i=1}^{t-z_{d}}[2f(i+1)-f(i)-f(i+2)]^+ , \label{marker20}
\end{align}
where
     \begin{enumerate}[(a)]
    \item holds because conditioning reduces entropy; \item holds because $l\in \mathcal{V}_{t}$; \item and   
    (d) hold by the chain rule\addtocounter{enumi}{1}; \item holds by the definition of~$\beta_{i}$; \item  holds by \eqref{marker19} and because conditioning reduces entropy; \item  holds by \eqref{marker1b}; \item holds because $\beta_t\geq \beta_{t-1}$; \item holds because $\beta_t-\beta_{t-1}=[ 2\beta_t-\beta_{t-1}-\beta_{t+1}]^+$; \item holds by the data processing inequality; \item holds because $K_{l}$ is independent of $(F, R, K_{\mathcal{V}_{t-1}})$.\end{enumerate}\end{proof} Then, for $f\in \mathcal{F}$, and $d \in \llbracket 0 ,D \rrbracket$, we have
   \begin{align}
      \displaystyle   \sum_{i=1}^{t-z_{d}}[2f(i+1)-f(i)-f(i+2)]^+  
     \nonumber & \stackrel{(a)}\ge \sum_{i=j}^{t-z_{d}}[2f(i+1)-f(i)-f(i+2)]^+ \\ 
  &\ge \sum_{i=j}^{t-z_{d}}(2f(i+1)-f(i)-f(i+2) )  \nonumber \\
&= \sum_{i=j}^{t-z_{d}}\bigl(f(i+1)-f(i)\bigr)-\sum_{i=j}^{t-z_{d}}\bigl(f(i+2)-f(i+1)\bigr) \nonumber\\
&= \bigl(f(j+1)-f(j)\bigr)-\bigl(f(t-z_{d}+2)-f(t-z_{d}+1)\bigr) \nonumber\\
&= f(j+1)-f(j)\nonumber\\
& = \max_{i\in \llbracket 1,t-z_d\rrbracket}\bigl(f(i+1)-f(i)\bigr) \nonumber \\
& \ge \frac{1}{t-z_d}\sum_{i=1}^{t-z_d}\bigl(f(i+1)-f(i)\bigr) \nonumber \\
& =\frac{f(t-z_d+1)-f(1)}{t-z_d}\nonumber \\
&=\frac{\beta_t-\beta_{z_d}}{t-z_d}\nonumber \\
&\ge H(F)\frac{1-\alpha_d}{t-z_d},\label{marker21}
   \end{align}
   where in $(a)$  $j\in \llbracket 1,t-z_d\rrbracket$ is such that $$f(j+1)-f(j)=\max_{i\in \llbracket 1,t-z_d\rrbracket}\bigl(f(i+1)-f(i)\bigr);$$ $(b)$ holds because   $\beta_{z_{d}} \leq \alpha_{d} \times H(F)$ by~\eqref{marker2b} and $\beta_{t} = H(F)$ by~\eqref{marker1b}.

   Finally, for $l \in \mathcal{L}$, we have
   \begin{align}
   \nonumber r^{(M)}_{l} &\geq H(M_l)\\
   & \stackrel{}{\geq} r^{(F)}  \frac{1-\alpha_{d}}{t-z_{d}} , \label{marker22}  
   \end{align}
     where the last inequality holds by the uniformity of $F$, Lemma~\ref{lem3}, and \eqref{marker21}. Since \eqref{marker22} is valid for any private file storage strategy, it is also valid for a $(2^{r^{(F)}},2^{r^{(R)}},(2^{r^{(M)}_{l}})_{l \in \mathcal{L}}, (2^{r^{(S)}_{l}})_{l \in \mathcal{L}}) $ private file storage
strategy that $(\boldsymbol{\alpha}, t, \boldsymbol{z})$-achieves $r^{(F)}$, where $r^{(F)}=r^{(F)}_{*}(\boldsymbol{\alpha}, t, \boldsymbol{z})$,    and thus for $l \in \mathcal{L}$
\begin{align*}
    &\forall d \in \llbracket 0, D\rrbracket,   \frac{1-\alpha_{d}}{t-z_{d}}r^{(F)}_{*}(\boldsymbol{\alpha}, t, \boldsymbol{z}) \leq r^{(M)}_{l,*} (\boldsymbol{\alpha}, t, \boldsymbol{z}) .
\end{align*}

\section{Proof of Theorem~\ref{thm5}}\label{appendix:e}
  
Consider an arbitrary $(2^{r^{(F)}},2^{r^{(R)}},(2^{r^{(M)}_{l}})_{l \in \mathcal{L}}, (2^{r^{(S)}_{l}})_{l \in \mathcal{L}}) $ private file storage  strategy that $(\alpha,t,z)$-achieves $r^{(F)}$. 

Let $\mathcal{V} \subseteq \mathcal{L}$ such that $v \triangleq|\mathcal{V}|< z$. Then, let $\mathcal{T} \subseteq \mathcal{L} \backslash \mathcal{V}$ and $\mathcal{S} \subseteq \mathcal{L} \backslash ( \mathcal{T} \cup \mathcal{V})$ such that$ |\mathcal{T}|=z-v$, and $ |\mathcal{S}|=t-z$. We first derive the following lemma.
\begin{lem}
    We have\label{lem7}
\begin{align*}
    \displaystyle \sum_{l \in \mathcal{S}} H(M_{l},K_{l}|M_{\mathcal{V}}, K_{\mathcal{V}}) \geq (1- \alpha)H(F)+ \displaystyle \sum_{l \in \mathcal{S}}H(K_{l}). 
\end{align*} 
\end{lem}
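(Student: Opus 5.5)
We adapt the argument of Lemma~\ref{lem2}, now conditioning throughout on $(M_{\mathcal{V}},K_{\mathcal{V}})$. The coalition $\mathcal{V}\cup\mathcal{T}$ has size $z$, and the set $\mathcal{V}\cup\mathcal{T}\cup\mathcal{S}$ has size $t$.

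First, by subadditivity of conditional entropy,
\begin{align*}
\sum_{l\in\mathcal{S}} H(M_l,K_l|M_{\mathcal{V}},K_{\mathcal{V}}) \geq H(M_{\mathcal{S}},K_{\mathcal{S}}|M_{\mathcal{V}},K_{\mathcal{V}}).
\end{align*}
Since conditioning reduces entropy, this is in turn at least $H(M_{\mathcal{S}},K_{\mathcal{S}}|M_{\mathcal{V}\cup\mathcal{T}},K_{\mathcal{V}\cup\mathcal{T}})$.

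Next we split this conditional entropy as
\begin{align*}
I(M_{\mathcal{S}},K_{\mathcal{S}};F|M_{\mathcal{V}\cup\mathcal{T}},K_{\mathcal{V}\cup\mathcal{T}})+H(M_{\mathcal{S}},K_{\mathcal{S}}|F,M_{\mathcal{V}\cup\mathcal{T}},K_{\mathcal{V}\cup\mathcal{T}}).
\end{align*}
We bound the mutual-information term exactly as in steps (b)--(e) of Lemma~\ref{lem2}.
\begin{itemize}
\item Write it as $H(F|M_{\mathcal{V}\cup\mathcal{T}},K_{\mathcal{V}\cup\mathcal{T}})-H(F|M_{\mathcal{V}\cup\mathcal{T}\cup\mathcal{S}},K_{\mathcal{V}\cup\mathcal{T}\cup\mathcal{S}})$.
\item The subtracted term is $0$ by \eqref{marker1}, using the data processing inequality through $S_{\mathcal{V}\cup\mathcal{T}\cup\mathcal{S}}$ and $\hat F$, since $|\mathcal{V}\cup\mathcal{T}\cup\mathcal{S}|=t$.
\item The first term equals $H(F)-I(F;M_{\mathcal{V}\cup\mathcal{T}},K_{\mathcal{V}\cup\mathcal{T}})$. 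We have $I(F;M_{\mathcal{V}\cup\mathcal{T}},K_{\mathcal{V}\cup\mathcal{T}})\leq I(F;M,K_{\mathcal{V}\cup\mathcal{T}})$, and this is at most $\alpha H(F)$ by \eqref{marker2}, because $|\mathcal{V}\cup\mathcal{T}|=z$.
\end{itemize}
This yields the $(1-\alpha)H(F)$ contribution.

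For the remaining entropy term, we lower bound it by $H(K_{\mathcal{S}}|F,M_{\mathcal{V}\cup\mathcal{T}},K_{\mathcal{V}\cup\mathcal{T}})$. Since $M_{\mathcal{V}\cup\mathcal{T}}$ is a function of $(F,R,K_{\mathcal{V}\cup\mathcal{T}})$, the data processing inequality gives a further lower bound of $H(K_{\mathcal{S}}|F,R,K_{\mathcal{V}\cup\mathcal{T}})$. This equals $\sum_{l\in\mathcal{S}}H(K_l)$ by joint independence and uniformity of the keys and their independence from $(F,R)$.

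Combining the bounds gives the claim. The only delicate point is making sure the conditioning on $(M_{\mathcal{V}},K_{\mathcal{V}})$ can be enlarged to $(M_{\mathcal{V}\cup\mathcal{T}},K_{\mathcal{V}\cup\mathcal{T}})$ before splitting off $F$. This works because we drop conditioning only in the direction that keeps an inequality valid, applying it to the joint entropy before the chain rule. The privacy constraint is then applied to the size-$z$ set $\mathcal{V}\cup\mathcal{T}$ rather than to $\mathcal{V}$ alone.
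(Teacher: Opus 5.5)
Your proposal is correct and follows essentially the same route as the paper's proof. You use subadditivity, enlarge the conditioning to $(M_{\mathcal{V}\cup\mathcal{T}},K_{\mathcal{V}\cup\mathcal{T}})$, split off $F$, apply reliability on the size-$t$ set and the leakage constraint on the size-$z$ set, and then lower bound the residual term by $H(K_{\mathcal{S}}|F,R,K_{\mathcal{V}\cup\mathcal{T}})=\sum_{l\in\mathcal{S}}H(K_l)$; the only difference is cosmetic, namely that you drop $M_{\mathcal{S}}$ before replacing $M_{\mathcal{V}\cup\mathcal{T}}$ by $R$, whereas the paper does these two steps in the opposite order.
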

\begin{proof}
     We have
\begin{align*}
    \displaystyle \sum_{l \in \mathcal{S}} H(M_{l},K_{l}|M_{\mathcal{V}}, K_{\mathcal{V}})
   & \stackrel{(a)}{\geq} H(M_{\mathcal{S}},K_{\mathcal{S}}|M_{\mathcal{V}}, K_{\mathcal{V}})\\
   & \stackrel{(b)}{\geq} H(M_{\mathcal{S}},K_{\mathcal{S}}|M_{\mathcal{V}\cup \mathcal{T}}, K_{\mathcal{V}\cup \mathcal{T}}) \\
   & \stackrel{}{=} I(M_{\mathcal{S}},K_{\mathcal{S}};F|M_{\mathcal{V}\cup \mathcal{T}}, K_{\mathcal{V}\cup \mathcal{T}})+H(M_{\mathcal{S}},K_{\mathcal{S}}|F,M_{\mathcal{V}\cup \mathcal{T}}, K_{\mathcal{V}\cup \mathcal{T}}) \\
   & \stackrel{}{=} H(F|M_{\mathcal{V}\cup \mathcal{T}}, K_{\mathcal{V}\cup \mathcal{T}})-H(F|M_{\mathcal{V}\cup \mathcal{T}\cup \mathcal{S}}, K_{\mathcal{V}\cup \mathcal{T}\cup \mathcal{S}})+H(M_{\mathcal{S}},K_{\mathcal{S}}|F,M_{\mathcal{V}\cup \mathcal{T}}, K_{\mathcal{V}\cup \mathcal{T}})\\
   & \stackrel{(c)}{\geq} H(F|M_{\mathcal{V}\cup \mathcal{T}}, K_{\mathcal{V}\cup \mathcal{T}})-H(F| S_{\mathcal{V}\cup \mathcal{T}\cup \mathcal{S}})+H(M_{\mathcal{S}},K_{\mathcal{S}}|F,M_{\mathcal{V}\cup \mathcal{T}}, K_{\mathcal{V}\cup \mathcal{T}})\\
   & \stackrel{(d)}{\geq} H(F|M_{\mathcal{V}\cup \mathcal{T}}, K_{\mathcal{V}\cup \mathcal{T}})-H(F| \hat{F}(\mathcal{V}\cup \mathcal{T}\cup \mathcal{S}))+H(M_{\mathcal{S}},K_{\mathcal{S}}|F,M_{\mathcal{V}\cup \mathcal{T}}, K_{\mathcal{V}\cup \mathcal{T}}) \\ 
   & \stackrel{(e)}{=} H(F)+H(M_{\mathcal{S}},K_{\mathcal{S}}|F,M_{\mathcal{V}\cup \mathcal{T}}, K_{\mathcal{V}\cup \mathcal{T}})-I(F;M_{\mathcal{V} \cup \mathcal{T}},K_{\mathcal{V} \cup \mathcal{T}}) \displaybreak[1]\\
   & \stackrel{(f)}{\geq}(1- \alpha)H(F)+H(M_{\mathcal{S}},K_{\mathcal{S}}|F,R, K_{\mathcal{V}\cup \mathcal{T}}) \\
   & \stackrel{(g)}{\geq} (1- \alpha)H(F)+H(K_{\mathcal{S}}|F,R, K_{\mathcal{V}\cup \mathcal{T}})\\
   & \stackrel{(h)}{=} (1- \alpha)H(F)+H(K_{\mathcal{S}})\\
   & \stackrel{(i)}{=} (1- \alpha)H(F)+ \displaystyle \sum_{l \in \mathcal{S}}H(K_{l}),
\end{align*}
 where
     \begin{enumerate}[(a)]
    \item and (b) hold by the chain rule and  because conditioning reduces entropy\addtocounter{enumi}{1}; \item 
       holds by the data processing inequality and because $S$ is a function of $(M_{(\mathcal{V}\cup\mathcal{T}\cup\mathcal{S})},K_{(\mathcal{V}\cup\mathcal{T}\cup\mathcal{S})})$; \item holds by the data processing inequality and because $\hat{F}(\mathcal{V}\cup\mathcal{T}\cup\mathcal{S})$ is a function of $S(\mathcal{V}\cup\mathcal{T}\cup\mathcal{S})$; \item  holds by \eqref{marker1} because $|\mathcal{V}\cup \mathcal{T}\cup \mathcal{S}| = t$; \item holds by \eqref{marker2} because $|\mathcal{V}\cup \mathcal{T}| = z$, and by the data  processing inequality because $M_{\mathcal{V} \cup \mathcal{T}}$ is a function of $(F,R,K_{\mathcal{V} \cup \mathcal{T}})$; \item hold by the chain rule and non-negativity of the entropy; \item holds because the keys $K_{\mathcal{S}}$ are independent of $(F, R, K_{\mathcal{V} \cup \mathcal{T}})$; \item holds by independence of $K_{\mathcal{S}}$. \end{enumerate}
\end{proof}
Then, from Lemma \ref{lem7}, we deduce Lemma \ref{lem8}.
\begin{lem} \label{lem8}
    Consider 
    $$l^*(\mathcal{V}) \in \displaystyle \arg\max_{l \in \mathcal{L} \backslash \mathcal{V}}[H(M_{l}, K_{l}|M_{\mathcal{V}}, K_{\mathcal{V}})-H(K_{l})].$$
     We have 
   \begin{align*}
     \nonumber  \frac{1-\alpha}{t-z}H(F) \leq H(M_{\mathcal{L}},K_{\mathcal{L}}|M_{\mathcal{V}},K_{\mathcal{V}})-H(M_{\mathcal{L}},K_{\mathcal{L}}|M_{\mathcal{V}\cup \{ l^*(\mathcal{V})\}},K_{\mathcal{V}\cup \{ l^*(\mathcal{V})\}})-n. 
   \end{align*}
\end{lem}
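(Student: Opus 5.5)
The plan is to combine Lemma~\ref{lem7} with an averaging argument over the $t-z$ elements of $\mathcal{S}$, and then to turn the resulting single-server quantity into a difference of joint conditional entropies.

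\textbf{Step 1 (averaging).} Lemma~\ref{lem7} gives
\begin{align*}
\sum_{l\in\mathcal{S}}\big[H(M_l,K_l|M_{\mathcal{V}},K_{\mathcal{V}})-H(K_l)\big]\geq(1-\alpha)H(F).
\end{align*}
Since $|\mathcal{S}|=t-z$, at least one summand is at least $\frac{1-\alpha}{t-z}H(F)$. Because $\mathcal{S}\subseteq\mathcal{L}\setminus\mathcal{V}$, the maximizer $l^*\triangleq l^*(\mathcal{V})$ over $\mathcal{L}\setminus\mathcal{V}$ satisfies
\begin{align*}
\frac{1-\alpha}{t-z}H(F)\leq H(M_{l^*},K_{l^*}|M_{\mathcal{V}},K_{\mathcal{V}})-n,
\end{align*}
using $H(K_{l^*})=n$. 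The existence of $\mathcal{T}$ and $\mathcal{S}$ requires $L\geq t$, which holds.

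\textbf{Step 2 (rewriting as a difference).} The chain rule gives
\begin{align*}
H(M_{\mathcal{L}},K_{\mathcal{L}}|M_{\mathcal{V}},K_{\mathcal{V}})
&=H(M_{l^*},K_{l^*}|M_{\mathcal{V}},K_{\mathcal{V}})\\
&\quad+H(M_{\mathcal{L}},K_{\mathcal{L}}|M_{\mathcal{V}\cup\{l^*\}},K_{\mathcal{V}\cup\{l^*\}}).
\end{align*}
This holds because $(M_{\mathcal{L}},K_{\mathcal{L}})$ contains $(M_{l^*},K_{l^*})$, so the conditional entropy of the pair $(M_{l^*},K_{l^*}),(M_{\mathcal{L}},K_{\mathcal{L}})$ given $(M_{\mathcal{V}},K_{\mathcal{V}})$ equals $H(M_{\mathcal{L}},K_{\mathcal{L}}|M_{\mathcal{V}},K_{\mathcal{V}})$. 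The term $H(M_{l^*},K_{l^*}|M_{\mathcal{V}},K_{\mathcal{V}})$ therefore equals the difference appearing in the statement. Substituting this into Step~1 yields the claimed inequality.

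\textbf{Main obstacle.} The only delicate point is in Step~2: checking that conditioning on $\mathcal{V}\cup\{l^*\}$ is exactly conditioning on $(M_{\mathcal{V}},K_{\mathcal{V}})$ together with $(M_{l^*},K_{l^*})$. This requires $l^*\notin\mathcal{V}$, which the argmax domain guarantees.
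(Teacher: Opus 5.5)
Your proof is correct. The final steps (using $H(K_{l^*})=n$ and the chain rule) match the paper's, but you reach the key averaging bound by a simpler route.

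The paper does not fix a single pair $(\mathcal{T},\mathcal{S})$. It introduces the normalizing constant $\Omega=\frac{1}{t-z}\binom{L-v}{z-v}^{-1}\binom{L-z}{t-z}^{-1}$ and sums Lemma~\ref{lem7} over all admissible $\mathcal{T}\subseteq\mathcal{L}\setminus\mathcal{V}$ and $\mathcal{S}\subseteq\mathcal{L}\setminus(\mathcal{T}\cup\mathcal{V})$. Using the counting identity of \cite[Lemma 3.2]{b21} twice, it collapses the double sum into $\frac{1}{L-v}\sum_{l\in\mathcal{L}\setminus\mathcal{V}}[H(M_l,K_l|M_{\mathcal{V}},K_{\mathcal{V}})-H(K_l)]$. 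Only then does it bound this average by the value at the maximizer $l^*(\mathcal{V})$.

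Your pigeonhole argument over one $\mathcal{S}$ avoids all of this combinatorics and needs only one admissible pair to exist, i.e., $L\geq t$. The paper's symmetrization proves something stronger along the way: the average excess over all servers in $\mathcal{L}\setminus\mathcal{V}$, not just the maximum, is at least $\frac{1-\alpha}{t-z}H(F)$. Since the lemma is stated for the maximizer, that extra strength is not used, so your shortcut loses nothing here.

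The point you flag as delicate in Step 2 is not actually delicate. The chain-rule identity holds for every $l$, and when $l\in\mathcal{V}$ it reduces trivially to $H(\cdot|M_{\mathcal{V}},K_{\mathcal{V}})=0+H(\cdot|M_{\mathcal{V}},K_{\mathcal{V}})$.
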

\begin{proof}
    We have
\begin{align*}
     \frac{1-\alpha}{t-z}H(F)
    & \stackrel{(a)}{=}
    \Omega \displaystyle \sum_{\substack{{\mathcal{T} \subseteq \mathcal{L}\backslash \mathcal{V}}\\{|\mathcal{T}|=z-v}}}
    \sum_{\substack{{\mathcal{S} \subseteq \mathcal{L}\backslash (\mathcal{T} \cup \mathcal{V})}\\{|\mathcal{S}|=t-z}}}(1-\alpha)H(F)\\
    & \stackrel{(b)}{\leq}  \Omega \displaystyle \sum_{\substack{{\mathcal{T} \subseteq \mathcal{L}\backslash \mathcal{V}}\displaybreak[0]\\{|\mathcal{T}|=z-v}}}
    \sum_{\substack{{\mathcal{S} \subseteq \mathcal{L}\backslash (\mathcal{T} \cup \mathcal{V})}\\{|\mathcal{S}|=t-z}}}\sum_{l \in \mathcal{S}}[H(M_{l},K_{l}|M_{\mathcal{V}},K_{\mathcal{V}})-H(K_{l})]\\
    & \stackrel{(c)}{=} \Omega \displaystyle \sum_{\substack{{\mathcal{T} \subseteq \mathcal{L}\backslash \mathcal{V}}\\{|\mathcal{T}|=z-v}}}\binom{L-z-1}{t-z-1} \sum_{l \in \mathcal{L} \backslash(\mathcal{T} \cup \mathcal{V})}[H(M_{l},K_{l}|M_{\mathcal{V}},K_{\mathcal{V}})-H(K_{l})]\\
    & \stackrel{(d)}{=} \Omega \binom{L-z-1}{t-z-1} \displaystyle \sum_{\substack{{\mathcal{T}\subseteq \mathcal{L}\backslash \mathcal{V}}\\{|\mathcal{T}|=L-z}}}\sum_{l \in \mathcal{T}}[H(M_{l},K_{l}|M_{\mathcal{V}},K_{\mathcal{V}})-H(K_{l})]\\
    &\stackrel{(e)}{=} \Omega \binom{L-z-1}{t-z-1}\binom{L-v-1}{L-z-1} \displaystyle \sum_{l \in \mathcal{L}\backslash \mathcal{V}}[H(M_{l},K_{l}|M_{\mathcal{V}},K_{\mathcal{V}})-H(K_{l})]\\
    &\stackrel{(f)}{=}\frac{1}{L-v}\displaystyle \sum_{l \in \mathcal{L}\backslash \mathcal{V}}[H(M_{l},K_{l}|M_{\mathcal{V}},K_{\mathcal{V}})-H(K_{l})]\\
    & \stackrel{(g)}{\leq} H(M_{l^{*}(\mathcal{V})},K_{l^{*}(\mathcal{V})}|M_{\mathcal{V}},K_{\mathcal{V}})-H(K_{l^{*}(\mathcal{V})})\\
    & \stackrel{(h)}{=} H(M_{l^{*}(\mathcal{V})},K_{l^{*}(\mathcal{V})}|M_{\mathcal{V}},K_{\mathcal{V}})-n\\
    & \stackrel{(i)}{=}H(M_{\mathcal{L}},K_{\mathcal{L}}|M_{\mathcal{V}},K_{\mathcal{V}})-H(M_{\mathcal{L}},K_{\mathcal{L}}|M_{\mathcal{V}\cup \{ l^*(\mathcal{V})\}},K_{\mathcal{V}\cup \{ l^*(\mathcal{V})\}})-n,
\end{align*}
where
    \begin{enumerate}[(a)]\item holds with $$\Omega \triangleq \frac{1}{t-z}\binom{L-v}{z-v}^{-1}\binom{L-z}{t-z}^{-1};$$ 
    \item  holds by Lemma~\ref{lem7};  
    \item and (e)  hold by \cite[Lemma 3.2]{b21}; \item  holds by a change of variable in the sums;  \addtocounter{enumi}{1}
    \item  holds because $$\Omega \binom{L-z-1}{t-z-1}\binom{L-v-1}{L-z-1}=\frac{1}{L-v};$$ \item holds by the definition of~$l^*(\mathcal{V})$; \item holds because the keys $K_{l^*(\mathcal{V})}$ are uniformly distributed;\item
    holds by the chain rule.\end{enumerate}
\end{proof}
Next, we apply multiple times Lemma~\ref{lem8} to obtain the following lemma.
    \begin{lem}\label{lem9}
        Define $\mathcal{V}_{0}\triangleq \emptyset$ and, for $i \in \llbracket  1 , z \rrbracket$,  $\mathcal{V}_{i} \triangleq \mathcal{V}_{i-1} \cup\{ l^*(\mathcal{V}_{i-1})\}$. We have
        $$ \frac {z(1-\alpha)}{t-z}H(F) \leq H(M_{\mathcal{L}},K_{\mathcal{L}})-H(M_{\mathcal{L}},K_{\mathcal{L}}|M_{\mathcal{V}_z},K_{\mathcal{V}_z})-nz.$$
    \end{lem}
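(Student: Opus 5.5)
The plan is to apply Lemma~\ref{lem8} once for each $i \in \llbracket 0, z-1\rrbracket$ with $\mathcal{V}=\mathcal{V}_i$, and then sum the resulting inequalities so that the right-hand sides telescope.

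\textbf{Applicability.} For every $i \in \llbracket 0,z-1\rrbracket$, the set $\mathcal{V}_i$ has size $v=i<z$. This holds because each step adds $l^*(\mathcal{V}_{i-1}) \in \mathcal{L}\backslash \mathcal{V}_{i-1}$, which is a new element, so $|\mathcal{V}_i|=i$ by induction. Moreover, $\mathcal{L}\backslash\mathcal{V}_i$ is large enough to contain disjoint sets $\mathcal{T}$ and $\mathcal{S}$ with $|\mathcal{T}|=z-i$ and $|\mathcal{S}|=t-z$, since $L\geq t$. Hence the hypotheses of Lemma~\ref{lem7} and Lemma~\ref{lem8} are met. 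Lemma~\ref{lem8} with $\mathcal{V}=\mathcal{V}_i$ then gives
\begin{align*}
\frac{1-\alpha}{t-z}H(F) \leq H(M_{\mathcal{L}},K_{\mathcal{L}}|M_{\mathcal{V}_i},K_{\mathcal{V}_i})-H(M_{\mathcal{L}},K_{\mathcal{L}}|M_{\mathcal{V}_{i+1}},K_{\mathcal{V}_{i+1}})-n,
\end{align*}
where we used $\mathcal{V}_i\cup\{l^*(\mathcal{V}_i)\}=\mathcal{V}_{i+1}$.

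\textbf{Summation.} Summing over $i=0,\dots,z-1$, the left-hand side becomes $\frac{z(1-\alpha)}{t-z}H(F)$. On the right-hand side the conditional entropies telescope to
\[
H(M_{\mathcal{L}},K_{\mathcal{L}}|M_{\mathcal{V}_0},K_{\mathcal{V}_0})-H(M_{\mathcal{L}},K_{\mathcal{L}}|M_{\mathcal{V}_z},K_{\mathcal{V}_z}),
\]
and the constants add up to $-nz$. Since $\mathcal{V}_0=\emptyset$, the first term is the unconditional entropy $H(M_{\mathcal{L}},K_{\mathcal{L}})$, which yields the claimed inequality.

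\textbf{Main point to check.} The only subtle step is verifying that Lemma~\ref{lem8} is valid for every $\mathcal{V}_i$ with $i<z$, including the edge case $i=0$ with $\mathcal{V}=\emptyset$. This amounts to confirming that the combinatorial averaging in its proof, with the constant $\Omega$ and the double sum over $\mathcal{T}$ and $\mathcal{S}$, only requires $v<z$ and $|\mathcal{L}\backslash\mathcal{V}|\geq t-v$. Both conditions hold here.
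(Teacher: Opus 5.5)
Your proposal is correct and matches the paper's proof: apply Lemma~\ref{lem8} with $\mathcal{V}=\mathcal{V}_i$ for each $i\in\llbracket 0,z-1\rrbracket$, then sum so that the conditional entropies telescope to $H(M_{\mathcal{L}},K_{\mathcal{L}})-H(M_{\mathcal{L}},K_{\mathcal{L}}|M_{\mathcal{V}_z},K_{\mathcal{V}_z})$ and the constants to $-nz$. The paper leaves the applicability check implicit; you make it explicit by showing $|\mathcal{V}_i|=i<z$ (including the case $\mathcal{V}_0=\emptyset$) and that $\mathcal{L}\backslash\mathcal{V}_i$ is large enough to hold the disjoint sets $\mathcal{T}$ and $\mathcal{S}$.
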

    \begin{proof}
        We have 
\begin{align*}
          \frac {z(1-\alpha)}{t-z}H(F) 
         & \stackrel{}{=}  \displaystyle \sum_{i=0}^{z-1}\frac{1-\alpha}{t-z}H(F) \\
        & \leq \displaystyle \sum_{i=0}^{z-1}[  H(M_{\mathcal{L}},K_{\mathcal{L}}|M_{\mathcal{V}_{i}},K_{\mathcal{V}_{i}})-H(M_{\mathcal{L}},K_{\mathcal{L}}|M_{\mathcal{V}_{i+1}},K_{\mathcal{V}_{i+1}})-n]\\
        & =  H(M_{\mathcal{L}},K_{\mathcal{L}})-H(M_{\mathcal{L}},K_{\mathcal{L}}|M_{\mathcal{V}_{z}},K_{\mathcal{V}_{z}})-nz,
\end{align*}
where the inequality holds by  applying $z$ times Lemma~\ref{lem8} and the definition of $\mathcal{V}_{i} $,  $ i \in \llbracket 0, z \rrbracket$.\end{proof}
Finally, we simplify the upper bound of Lemma~\ref{lem9} as follows. We have
\begin{align}
     \nonumber\frac {z(1-\alpha)}{t-z}r^{(F)} \displaybreak[0]
      \nonumber& \stackrel{(a)}{=} \frac {z(1-\alpha)}{t-z}H(F)\\
      \nonumber& \stackrel{(b)}{\leq} H(M_{\mathcal{L}},K_{\mathcal{L}})-H(M_{\mathcal{L}},K_{\mathcal{L}}|M_{\mathcal{V}_{z}},K_{\mathcal{V}_{z}})-nz\\
      \nonumber& \stackrel{(c)}{=}H(M_{\mathcal{L}},K_{\mathcal{L}})-H(F,M_{\mathcal{L}},K_{\mathcal{L}}|M_{\mathcal{V}_{z}},K_{\mathcal{V}_{z}})-nz\\
       \nonumber& \stackrel{(d)}{\leq}H(M_{\mathcal{L}},K_{\mathcal{L}})-H(F,K_{\mathcal{V}^c_{z}}|M_{\mathcal{V}_{z}},K_{\mathcal{V}_{z}})-nz\\
      \nonumber & \stackrel{(e)}{=}H(M_{\mathcal{L}},K_{\mathcal{L}})-H(F|M_{\mathcal{V}_{z}},K_{\mathcal{V}_{z}})- H(K_{\mathcal{V}^c_{z}}|F,M_{\mathcal{V}_{z}},K_{\mathcal{V}_{z}})-nz\\
       \nonumber& \stackrel{}{\leq}H(M_{\mathcal{L}},K_{\mathcal{L}})-(1- \alpha) H(F)- H(K_{\mathcal{V}^c_{z}}|F,M_{\mathcal{V}_{z}},K_{\mathcal{V}_{z}})-nz\\
       \nonumber & \stackrel{(f)}{\leq} H(M_{\mathcal{L}},K_{\mathcal{L}})-H(K_{\mathcal{V}^c_{z}}|F,R,K_{\mathcal{V}_{z}})-(1-\alpha) H(F)-nz \\
       \nonumber & \stackrel{(g)}{=}H(M_{\mathcal{L}},K_{\mathcal{L}})-H(K_{\mathcal{V}^c_{z}})-(1-\alpha) H(F)-nz\\
         \nonumber& \stackrel{(h)}{=} H(M_{\mathcal{L}},K_{\mathcal{L}})-(1-\alpha) H(F)-nL\\
          \nonumber& \stackrel{(i)}{\leq}H(R,F,K_{\mathcal{L}})-(1-\alpha) H(F)-nL\\
           \nonumber& = H(R)+H(K_{\mathcal{L}})+H(F)-(1-\alpha) H(F)-nL\\
            \nonumber & \stackrel{(j)}{=} H(R)+H(F)-(1-\alpha) H(F)\\
              & \stackrel{(k)}{=} r^{(R)} +\alpha r^{(F)}, \nonumber
 \end{align}
 where
    \begin{enumerate}[(a)]\item   holds by uniformity of $F$; \item holds by  Lemma~\ref{lem9}; \item   holds by Definition~\ref{def1}; \item   holds by  the chain rule and non-negativity of entropy; \item   holds by  the chain rule; \item holds by the data processing inequality because $M_{\mathcal{V}_{z}}$ is a function of $(F,R,K_{\mathcal{V}_{z}})$; \item holds because the keys are jointly independent and $K_{\mathcal{V}^c_{z}}$ is independent of $(F, R, K_{\mathcal{V}_{z}})$; \item holds because $K_{\mathcal{V}^c_{z}}$  is uniformly distributed; \item  holds by the data processing inequality because  $M_{\mathcal{L}}$ is a function of $(F,R,K_{\mathcal{L}})$; \item  holds because  $K_{\mathcal{L}}$ is  distributed uniformly; \item holds because  $R$ and $F$ are uniformly distributed. 
\end{enumerate}\bibliographystyle{IEEEtran}
          \bibliography{bib}
 \end{document}